\newcommand{\un}[1]{{\underline{#1}}}
\newcommand{\bra}[1]{{\langle #1 \vert}}
\newcommand{\ket}[1]{{\vert #1 \rangle}}
\newcommand{\xket}[1]{{\vert #1 \rangle}}
\newcommand{\braket}[2]{\langle #1 \vert #2 \rangle}
\newcommand{\xbraket}[2]{\langle #1 \vert #2 \rangle}
\newcommand{\ave}[1]{{\langle #1\rangle}}
\newcommand{\ii}{ {\rm i} }
\newcommand{\dd}{ {\rm d} }
\newcommand{\RR}{\mathbb{R}}
\newcommand{\CC}{\mathbb{C}}
\newcommand{\cc}{ {\hat c} }
\newcommand{\bb}{ {\hat b} }
\newcommand{\aaa}{ {\hat a} }
\newcommand{\LL}{{\hat {\cal L}}}
\newcommand{\MM}{{\hat {\cal M}}}
\newcommand{\NN}{{\hat {\cal N}}}
\newcommand{\mm}[1]{{\mathbf{#1}}}
\def\K{{\cal K}}
\def\tr{{\,{\rm tr}\,}}
\def\one{\mathbbm{1}}
\def\re{{\,{\rm Re}\,}}
\def\ness{\xket{{\rm NESS}}}
\newtheorem{theorem}{Theorem}[section]
\newtheorem{lemma}[theorem]{Lemma}
\newtheorem{corollary}[theorem]{Corollary}
\newtheorem{conjecture}{Conjecture}[section]
\newenvironment{proof}[1][Proof]{\begin{trivlist}
\item[\hskip \labelsep {\bfseries #1}]}{\end{trivlist}}
\newcommand{\qed}{\nobreak \ifvmode \relax \else
      \ifdim\lastskip<1.5em \hskip-\lastskip
      \hskip1.5em plus0em minus0.5em \fi \nobreak
      \vrule height0.75em width0.5em depth0.25em\fi}
\begin{document}

\title[Spectral theorem for Lindblad equation]{Spectral theorem for the Lindblad equation for quadratic open fermionic systems}
  
\author{Toma\v{z} Prosen}

\address{Department of physics, FMF, University of Ljubljana, Slovenia}
 

\date{\today}

\begin{abstract}
The spectral theorem is proven for the quantum dynamics of quadratic open systems of $n$ fermions described by the Lindblad equation.
Invariant eigenspaces of the many-body Liouvillean dynamics and their largest Jordan blocks are explicitly constructed for all eigenvalues.
For eigenvalue zero we describe an algebraic procedure for constructing (possibly higher dimensional) spaces of (degenerate) non-equilibrium steady states.
\end{abstract}

\maketitle

\section{Introduction}

We study spectral decomposition of a general quantum Lindblad equation \cite{lindblad,gorini,alicki} with quadratic generators \cite{fannes1,fannes2} for an open system of $n$ fermions. The physical picture and the motivation for the study of such a non-equilibrium quantum many-body problem have been outlined in Ref. \cite{njp} (see also \cite{prl}), on which we shall rely extensively.  However, in this note we show that an important assumption of diagonalizability made in Ref. \cite{njp} can be removed and that one can prove the spectral theorem for a general quadratic many-body Liouvillean and construct its canonical Jordan form.

The main results presented bellow, are: (i) computationally efficient\footnote{Computational problem of diagonalizing $4n\times 4n$ complex matrix of Ref.\cite{njp} has been reduced to a problem of diagonalizing a $2n\times 2n$ real matrix (or possibly finding its Jordan canonical form if it cannot be diagonalized) and/or
solving a $2n\times 2n$ linear matrix equation of Sylvester type \cite{bartels}.} canonical representation of the structure matrix of the Liouvillean quadratic form as summarized in lemmas \ref{lemma:zerodriving}, \ref{lemma:similar}, (ii) writing the general (possibly non-diagonalizable) {\em normal form} of the quadratic Lioivilleans in terms of normal master mode maps [formula (\ref{eq:dLL})],  (iii) constructing the full many-body Liouvillean spectrum, the invariant subspaces, and the largest Jordan block in each of them (theorem \ref{spctheorem}), and (iv) algebraic construction of the convex subspaces of possibly degenerate (multiple) non-equlibrium steady states (theorem \ref{uniqueness}).
 
A general master equation governing time evolution of the density matrix $\rho(t)$ of an open quantum system, preserving the trace and positivity of $\rho$, can be written in the Lindblad form
 \cite{lindblad} as  
\begin{equation}
\frac{\dd\rho }{\dd t} = \LL\rho :=
-\ii [H,\rho] + \sum_{\mu} \left(2 L_\mu \rho L_\mu^\dagger - \{L_\mu^\dagger L_\mu,\rho\} \right)
\label{eq:lind}
\end{equation}
where $H$ is a Hermitian operator (Hamiltonian), $[x,y]:=xy-yx$, $\{x,y\}:=xy+yx$, and $L_\mu$ are arbitrary operators, physically representing couplings to various baths.

We shall treat open quantum systems of a finite number, say $n$ fermions, with a {\em quadratic} generator $\LL$. Mathematically speaking, we shall restrict our discussion to the case where a set of $2n$ anti-commuting, Hermitian, bounded operators exist, say $w_j$, $j=1,\ldots,2n$,
\begin{equation}
\{w_j,w_k\} = 2\delta_{j,k}, \qquad j,k =1,\ldots, 2n,
\label{eq:CARw}
\end{equation}
so that the Hamiltonian, and the Lindblad operators, can be simultaneously expressed in terms of a {\em quadratic form}, and {\em linear forms}, respectively
\begin{eqnarray}
H &=& \sum_{j,k=1}^{2n} w_j H_{jk} w_k = \un{w} \cdot \mm{H}\, \un{w}, \label{eq:hamil}  \\
L_\mu &=& \sum_{j=1}^{2n} l_{\mu,j} w_j = \un{l}_\mu \cdot \un{w}.\  \label{eq:lindb}
\end{eqnarray}
Throughout this paper $\un{x}=(x_1,x_2,\ldots)^T$ will designate a vector (column)
of appropriate scalar valued or operator valued symbols $x_k$ and upright bold letter (e.g. $\mm{H}$) will designate
a complex or real matrix.
 
The main conceptual tool of our analysis is a Fock space ${\cal K}$ of operators (which may be referred to also as {\em Liouville-Fock} space) which the density operator $\rho(t)$ (\ref{eq:lind}) is a member of.
${\cal K}$ is in fact a $2^{2n} = 4^n$ dimensional linear vector space spanned by a canonical basis
 $\xket{P_{\un{\alpha}}}$ with
\begin{equation}
P_{\alpha_1,\alpha_2,\ldots,\alpha_{2n}} := 2^{-n/2} w_1^{\alpha_1}w_2^{\alpha_2}\cdots w_{2n}^{\alpha_{2n}}, \qquad
\alpha_j\in\{ 0,1\},
\end{equation}
which is  {\em orthonormal} with respect to Hilbert-Schmidt inner product
\begin{equation}
\xbraket{x}{y} = \tr x^\dagger y.
\label{eq:inner}
\end{equation}
Introducing  {\em creation and annihilation linear
maps} $\hat{c}_j$ over ${\cal K}$
\begin{equation}
\cc^\dagger_j \xket{P_{\un{\alpha}}} = \delta_{\alpha_j,0} \xket{w_j P_{\un{\alpha}}} \quad
\cc_j \xket{P_{\un{\alpha}}} = \delta_{\alpha_j,1} \xket{w_j P_{\un{\alpha}}}
\label{eq:defanih}
\end{equation}
the space ${\cal K}$ becomes a fermionic Fock space. {\em Canonical anticommutation relations} (CAR) among operator maps
$\{\cc_j,\cc_k\} = 0,\; \{\cc_j,\cc^\dagger_k\}=\delta_{j,k}$ and the fact that $\cc^\dagger_j$ is indeed a Hermitian adjoint of $\cc_j$ w.r.t. (\ref{eq:inner}) can be checked straightforwardly.

As shown in Ref.\cite{njp} the Lindblad equation (\ref{eq:lind}) in the Liouville-Fock picture
\begin{equation}
\frac{\dd}{\dd t}\ket{\rho} = \LL \ket{\rho}
\label{eq:liouville}
\end{equation}
takes manifestly quadratic form, where the Liouvillean map is expressed as \footnote{Note also a technical issue of a {\em parity symmetry} of
the Liouvillean and the operator space w.r.t. even and odd number of fermionic operator-excitations \cite{njp} which we may ignore here without affecting any of the
results.} 
\begin{equation}
\LL = \un{\aaa}\cdot\mm{A}\un{\aaa} - A_0 \hat{\one}
\label{eq:Amatrix}
\end{equation}
in terms of $4n$ Hermitian Majorana fermionic maps over the operator space $\K$, 
\begin{equation}
\aaa_{1,j} := (\cc_j + \cc_j^\dagger)/\sqrt{2},\quad
\aaa_{2,j} := \ii (\cc_j - \cc_j^\dagger)/\sqrt{2}
\end{equation}
satisfying CAR
\begin{equation}
\{\aaa_{\nu,j},\aaa_{\mu,k}\} = \delta_{\nu,\mu}\delta_{j,k},\quad \nu,\mu=1,2,\;\; j,k=1,\ldots,2n.
\end{equation}
We note that physically interesting {\em non-equilibrium steady state} (NESS), $\ness \in {\cal K}$, can be defined as the (possibly degenerate, and then non-unique)
right zero-eigenvalue eigenvector of the Liouvillean $\LL \ness = 0$.
The antisymmetric $4n \times 4n$ {\em structure matrix} $\mm{A}$ (\ref{eq:Amatrix}) has the general form \cite{njp}
\begin{eqnarray}
\mm{A} &=& 
\pmatrix{ 
-2\ii\mm{H} + 2\ii \mm{M}_{\rm i} & 2\ii \mm{M} \cr
-2\ii\mm{M}^T & -2\ii\mm{H} - 2\ii \mm{M}_{\rm i}}  \nonumber \\
&=& -2\ii \one_2 \otimes \mm{H} - 2 \sigma^2 \otimes \mm{M}_{\rm r} - 2(\sigma^1 - \ii \sigma^3)\otimes \mm{M}_{\rm i}
\label{eq:AA}
\end{eqnarray}
where $\sigma^\alpha,\alpha=1,2,3$ are the Pauli matrices, $\mm{H}=-\mm{H}^T$ is an antisymmetric (due to (\ref{eq:CARw})) $2n\times 2n$ Hermitean (purely imaginary) matrix, whereas 
\begin{equation}
\mm{M}_{\rm r} := \frac{1}{2}(\mm{M}+ \bar{\mm{M}}) = \mm{M}_{\rm r}^T,\qquad
\mm{M}_{\rm i} := \frac{1}{2\ii}(\mm{M} - \bar{\mm{M}}) = -\mm{M}_{\rm i}^T
\end{equation}
are the real and imaginary parts of the positive semidefinite Hermitian $2n\times 2n$ Lindblad {\em bath matrix} $\mm{M} := \sum_\mu \un{l}_\mu \otimes \bar{\un{l}}_\mu$, $\mm{M} \ge 0$. 
Note that, physically speaking, 
$\mm{H}$ encodes the Hamiltonian, $\mm{M}_{\rm r}$ encodes the {\em dissipation} and $\mm{M}_{\rm i}$ encodes the {\em driving}. 
The scalar $A_0$ (\ref{eq:Amatrix}) has a value
\begin{equation}
A_0 = 2\tr \mm{M} = 2\tr \mm{M}_{\rm r}.
\label{eq:A0}
\end{equation}
Writing the skew unit
\begin{equation}
\mm{J} := \sigma^1 \otimes \one_{2n}
\end{equation}
we observe that the structure matrix satisfies a particular self-conjugation property
\begin{equation}
\bar{\mm{A}} = \mm{J}  \mm{A} \mm{J}, \quad {\rm or\; equivalently} \quad \mm{A} \mm{J} = \mm{J} \bar{\mm{A}}.
\end{equation}

\section{Preliminaries}

A series of useful lemmas on the spectral decomposition of the structure matrix $\mm{A}$ shall now be straightforwardly proven, which explore
the structure of the problem and go beyond the general spectral theorems for the anti-symmetric complex matrices \cite{matrixtheory}.

\begin{lemma} Structure matrix (\ref{eq:AA}) is unitarily equivalent to a block-triangular matrix
\begin{equation}
\mm{\tilde A} = -2\ii\one_2\otimes \mm{H} - 2\sigma^3 \otimes \mm{M}_{\rm r} +
4\ii \sigma^+ \otimes \mm{M}_{\rm i} = \pmatrix{
-\mm{X}^T  & 4\ii \mm{M}_{\rm i}\cr
\mm{0} & \mm{X}} \label{eq:tA}
\end{equation}
where
\begin{equation}
\mm{X} := -2\ii \mm{H} + 2\mm{M}_{\rm r}
\label{eq:X}
\end{equation}
is a {\rm real} $2n \times 2n$ matrix.
\label{lemmatilde}
\end{lemma}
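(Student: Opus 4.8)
The plan is to obtain the equivalence by a unitary that acts only on the two-dimensional Pauli factor of the $2\times 2n$ tensor structure, leaving the spatial factor untouched. First I would line up the two matrices term by term: both the Pauli-operator form of $\mm{A}$ in (\ref{eq:AA}) and the target $\mm{\tilde A}$ in (\ref{eq:tA}) carry the same $-2\ii\one_2\otimes\mm{H}$ piece, while the terms multiplying $\mm{M}_{\rm r}$ and $\mm{M}_{\rm i}$ have the $2\times2$ coefficients $-2\sigma^2$ and $-2(\sigma^1-\ii\sigma^3)$ in $\mm{A}$, and — recalling $\sigma^+=\frac{1}{2}(\sigma^1+\ii\sigma^2)$, so that $4\ii\sigma^+=2\ii\sigma^1-2\sigma^2$ — the coefficients $-2\sigma^3$ and $-2(\sigma^2-\ii\sigma^1)$ in $\mm{\tilde A}$. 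Hence it suffices to exhibit a single $\mm{P}\in U(2)$ with $\mm{P}\sigma^2\mm{P}^\dagger=\sigma^3$ and $\mm{P}(\sigma^1-\ii\sigma^3)\mm{P}^\dagger=\sigma^2-\ii\sigma^1$; splitting the second identity into its Hermitian and anti-Hermitian parts shows it is equivalent to the pair $\mm{P}\sigma^1\mm{P}^\dagger=\sigma^2$, $\mm{P}\sigma^3\mm{P}^\dagger=\sigma^1$. So $\mm{P}$ is required to implement, by conjugation, the cyclic permutation $\sigma^1\mapsto\sigma^2\mapsto\sigma^3\mapsto\sigma^1$ of the Pauli matrices.

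Such a $\mm{P}$ exists, and recognizing this is really the only step that takes a moment's thought: under the double cover $SU(2)\to SO(3)$ the cyclic permutation of the Paulis corresponds to the rotation of $\RR^3$ by $2\pi/3$ about the body diagonal $(1,1,1)$, so one may take explicitly $\mm{P}=\frac{1}{2}\bigl(\one_2-\ii(\sigma^1+\sigma^2+\sigma^3)\bigr)$. That this $\mm{P}$ is unitary and obeys the three conjugation identities is a one-line check using $(\sigma^1+\sigma^2+\sigma^3)^2=3\,\one_2$ and the Pauli multiplication table. Setting $\mm{U}:=\mm{P}\otimes\one_{2n}$, which is then unitary, the linearity of the map $\mm{B}\mapsto\mm{U}\mm{B}\mm{U}^\dagger$ together with the three identities yields at once $\mm{U}\mm{A}\mm{U}^\dagger=-2\ii\one_2\otimes\mm{H}-2\sigma^3\otimes\mm{M}_{\rm r}+4\ii\sigma^+\otimes\mm{M}_{\rm i}$, which is the first of the two expressions for $\mm{\tilde A}$ in (\ref{eq:tA}).

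It then remains only to read off the block structure. Expanding in the standard basis of the Pauli factor — $\sigma^3=\mathrm{diag}(1,-1)$, and $\sigma^+$ the matrix whose single nonzero entry (equal to $1$) sits in the upper-right slot — displays $\mm{\tilde A}$ as block upper-triangular with diagonal blocks $-2\ii\mm{H}-2\mm{M}_{\rm r}$ and $-2\ii\mm{H}+2\mm{M}_{\rm r}$ and off-diagonal block $4\ii\mm{M}_{\rm i}$. Using the stated symmetries $\mm{H}^T=-\mm{H}$ and $\mm{M}_{\rm r}^T=\mm{M}_{\rm r}$ one recognizes the lower-right block as $\mm{X}=-2\ii\mm{H}+2\mm{M}_{\rm r}$ and the upper-left one as $-\mm{X}^T$, matching the displayed form in (\ref{eq:tA}); and $\mm{X}$ is real because $\mm{M}_{\rm r}$ is real and $\mm{H}$ is purely imaginary, so that $-2\ii\mm{H}$ is real as well. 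Apart from spotting the cyclic-permutation unitary $\mm{P}$, the entire argument is a routine substitution with no genuine obstacle.
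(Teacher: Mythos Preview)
Your proof is correct and follows essentially the same route as the paper: both identify that the required conjugation is the cyclic permutation $\sigma^1\to\sigma^2\to\sigma^3\to\sigma^1$ acting on the $2\times2$ tensor factor, and then read off the block form. Your $\mm{P}=\tfrac{1}{2}\bigl(\one_2-\ii(\sigma^1+\sigma^2+\sigma^3)\bigr)$ coincides with the paper's explicit $\tfrac{1}{\sqrt{2}}\left(\begin{smallmatrix}1&-\ii\\1&\ii\end{smallmatrix}\right)$ up to an irrelevant overall phase $e^{-\ii\pi/4}$.
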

\begin{proof}
The similarity transformation $\mm{A}\to \tilde{\mm{A}}$
\begin{equation}
\tilde{\mm{A}} = \mm{U}\mm{A}\mm{U}^\dagger
\label{eq:Asim}
\end{equation}
is trivially provided by the cyclic permutation $\sigma^1 \to \sigma^2, \sigma^2 \to \sigma^3, \sigma^3 \to \sigma^1$, 
\begin{equation}
\mm{U} = \frac{1}{\sqrt{2}}\pmatrix{1 & -\ii \cr 1 & \ii} \otimes \one_{2n}.
\end{equation}
\qed
\end{proof} 

We shall refer to the representation in which the structure matrix has a form (\ref{eq:tA}) as the tilde-representation.
As an immediate consequence of lemma \ref{lemmatilde} we have:

\begin{corollary}
The characteristic polynomial of the structure matrix $\mm{A}$
factorizes in terms of the characteristic polynomial of $\mm{X}$ (\ref{eq:X})
\begin{equation}
p_\mm{A}(\beta) = p_\mm{X}(\beta) p_\mm{X}(-\beta),
\end{equation}
where $p_\mm{A}(\beta) := \det(\mm{A}-\beta \one_{4n}), p_\mm{X}(\beta) := \det(\mm{X}-\beta \one_{2n})$
and hence, if $\sigma(\mm{A})$ denotes the spectrum of a matrix,
\begin{equation}
\sigma(\mm{A}) = \sigma(\mm{X}) \cup (-\sigma(\mm{X})),
\end{equation}
where the union $\cup$ should keep track of (algebraic) multiplicities.
\end{corollary}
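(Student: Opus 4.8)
The plan is to reduce everything to Lemma~\ref{lemmatilde}. First I would invoke that $\mm{A}$ and $\tilde{\mm{A}}$ are unitarily equivalent, hence similar, so their characteristic polynomials coincide: $p_\mm{A}(\beta) = \det(\tilde{\mm{A}} - \beta\one_{4n})$. Then, using the block upper-triangular form (\ref{eq:tA}), the $4n\times 4n$ determinant factorizes into the product of the two $2n\times 2n$ diagonal-block determinants, $\det(-\mm{X}^T - \beta\one_{2n})\det(\mm{X} - \beta\one_{2n})$, since the off-diagonal block $4\ii\mm{M}_{\rm i}$ does not contribute to the determinant of a block-triangular matrix.

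The second step is the sign bookkeeping for the first factor. I would write $\det(-\mm{X}^T - \beta\one_{2n}) = (-1)^{2n}\det(\mm{X}^T + \beta\one_{2n}) = \det(\mm{X}^T + \beta\one_{2n})$, using that $2n$ is even, and then $\det(\mm{X}^T + \beta\one_{2n}) = \det(\mm{X} + \beta\one_{2n}) = \det(\mm{X} - (-\beta)\one_{2n}) = p_\mm{X}(-\beta)$, since transposition leaves the determinant invariant. Combining the two factors gives $p_\mm{A}(\beta) = p_\mm{X}(-\beta)p_\mm{X}(\beta)$, which is the claimed factorization.

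Finally, the spectral statement is read off directly: the multiset of roots of a product of polynomials is the multiplicity-respecting union of the multisets of roots of the factors, so $\sigma(\mm{A}) = \sigma(\mm{X}) \cup (-\sigma(\mm{X}))$, where $-\sigma(\mm{X})$ denotes the roots of $p_\mm{X}(-\beta)$, i.e.\ the negatives of the eigenvalues of $\mm{X}$ counted with the same algebraic multiplicities. There is essentially no obstacle here; the only point requiring care is that the parity factor $(-1)^{2n}$ equals $+1$, so that no spurious overall sign enters, everything else being the standard block-triangular determinant identity and the invariance of $\det$ under transposition.
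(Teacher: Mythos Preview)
Your proof is correct and is exactly the argument the paper has in mind: the corollary is stated there as ``an immediate consequence of lemma~\ref{lemmatilde}'' with no further details, and your write-up simply spells out the block-triangular determinant factorization and the sign/transpose bookkeeping that make it immediate.
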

Curiously enough, this implies that the spectrum of the structure matrix is insensitive to the driving matrix $\mm{M}_{\rm i}$.
Note also that the real matrix $\mm{X}$ (\ref{eq:X}) has no general structure apart from the fact that $\mm{X} + \mm{X}^T = 4\mm{M}_r \ge 0$,
since $\mm{M}_r \ge 0$ is implied by $\mm{M} \ge 0$. Sometimes the eigenvalues $\{ \beta_j \}$ of $\mm{X}$ shall be referred to as {\em rapidities}.

\begin{lemma} 
\label{lemma:stability}
Let $\mm{X}$ be a  {\em real} square matrix, such that $\mm{X} + \mm{X}^T \ge 0$.
Then:
\begin{enumerate}
\item Any eigenvalue $\beta$ of $\mm{X}$ satisfies $\re \beta \ge 0$.
\item For any eigenvalue $\beta$ of $\mm{X}$ on the imaginary line, $\re \beta= 0$, its algebraic and geometric multiplicities coincide.
\end{enumerate}
\end{lemma}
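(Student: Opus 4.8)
The plan is to use the hypothesis $\mm{X}+\mm{X}^T\ge 0$ directly on eigenvectors, working over $\CC^{2n}$ with the standard Hermitian inner product $\langle u,v\rangle = \un{\bar u}\cdot \un v$. For part (i), suppose $\mm{X}\un v = \beta\un v$ with $\un v\neq 0$, normalized so that $\langle \un v,\un v\rangle=1$. Then $\re\beta = \re\langle \un v,\mm{X}\un v\rangle = \tfrac12\langle \un v,(\mm{X}+\mm{X}^T)\un v\rangle \ge 0$, using that $\mm{X}$ is real so $\mm{X}^\dagger=\mm{X}^T$ and hence $\langle \un v,\mm{X}^T\un v\rangle = \overline{\langle \un v,\mm{X}\un v\rangle}$. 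This is immediate and should be stated in two lines.

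For part (ii) the claim is that a purely imaginary eigenvalue is semisimple, i.e. the corresponding Jordan blocks are all $1\times 1$. I would argue by contradiction: assume there is a Jordan chain of length $\ge 2$, so there exist vectors $\un v_1,\un v_2$ with $\mm{X}\un v_1=\beta\un v_1$, $\mm{X}\un v_2 = \beta\un v_2 + \un v_1$, $\un v_1\neq 0$, and $\re\beta=0$. The idea is to extract a contradiction with positive semidefiniteness of $\mm{S}:=\mm{X}+\mm{X}^T$. Consider the vector $\un w = \un v_2 + t\un v_1$ for real parameter $t$ and evaluate $\langle \un w, \mm{S}\un w\rangle \ge 0$. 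Since $\re\langle \un v_2,\mm{X}\un v_2\rangle$, $\re\langle \un v_1,\mm{X}\un v_2\rangle$ etc. can be computed using the chain relations and $\beta+\bar\beta=0$, the cross term linear in $t$ will survive while the quadratic-in-$t$ term vanishes (because $\langle \un v_1,\mm{S}\un v_1\rangle = 2\re(\beta\langle\un v_1,\un v_1\rangle)=0$), forcing the linear coefficient to be zero; then a short computation shows that coefficient equals $\pm\langle \un v_1,\un v_1\rangle$ or a nonzero multiple thereof, contradicting $\un v_1\neq 0$.

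More precisely, I expect the computation to go: $\langle \un v_1,\mm{S}\un v_2\rangle = \langle \un v_1,\mm{X}\un v_2\rangle + \langle \mm{X}\un v_1,\un v_2\rangle = \beta\langle\un v_1,\un v_2\rangle + \langle\un v_1,\un v_1\rangle + \bar\beta\langle\un v_1,\un v_2\rangle = \langle\un v_1,\un v_1\rangle$, using $\beta+\bar\beta=0$. Meanwhile $\langle\un v_1,\mm{S}\un v_1\rangle = (\beta+\bar\beta)\langle\un v_1,\un v_1\rangle = 0$. So with $\un w = \un v_2 + t\un v_1$, $\langle\un w,\mm{S}\un w\rangle = \langle\un v_2,\mm{S}\un v_2\rangle + 2t\,\re\langle\un v_1,\mm{S}\un v_2\rangle + t^2\langle\un v_1,\mm{S}\un v_1\rangle = \langle\un v_2,\mm{S}\un v_2\rangle + 2t\langle\un v_1,\un v_1\rangle$. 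As $t\to-\infty$ this tends to $-\infty$ since $\langle\un v_1,\un v_1\rangle>0$, contradicting $\mm{S}\ge 0$. Hence no chain of length $\ge 2$ exists, so algebraic and geometric multiplicities of $\beta$ coincide. (One should also note that $\mm{S}\ge 0$ with $\mm{S}$ real symmetric means $\langle\un w,\mm{S}\un w\rangle\ge 0$ for all complex $\un w$, which follows by splitting into real and imaginary parts.)

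The main obstacle, such as it is, is bookkeeping with the complex inner product and the transpose-versus-adjoint distinction for the real matrix $\mm{X}$; there is no deep difficulty. One subtlety worth flagging explicitly is that the Jordan structure argument only rules out chains of length $\ge 2$ but says nothing that would prevent the geometric multiplicity itself from being large — which is fine, since the statement only asserts equality of algebraic and geometric multiplicities, and that is exactly the absence of nontrivial Jordan blocks.
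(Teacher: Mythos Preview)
Your proof is correct. Part (i) is essentially identical to the paper's argument. For part (ii) you take a different route: the paper argues dynamically, observing that $\mm{X}+\mm{X}^T\ge 0$ is precisely the Lyapunov stability condition $(\dd/\dd t)\|\un{u}\|^2\le 0$ for the linear flow $\dot{\un{u}}=-\mm{X}\un{u}$, and then notes that a nontrivial Jordan block at an imaginary eigenvalue $\ii b$ produces a solution with polynomial growth $\un{u}(t)\propto t^{k-1}e^{-\ii b t}$, contradicting non-increase of the norm. Your argument is the purely algebraic counterpart: you exhibit directly a vector $\un{w}=\un{v}_2+t\un{v}_1$ along the Jordan chain for which $\langle\un{w},\mm{S}\un{w}\rangle$ is an affine function of $t$ with nonzero slope, hence unbounded below. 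The two are closely related --- indeed $e^{-\mm{X}t}\un{v}_2=e^{-\ii b t}(\un{v}_2-t\un{v}_1)$, so the paper's growing solution is, up to a phase, your one-parameter family --- but your version is more self-contained (no ODE language needed) while the paper's version makes the physical connection to stability theory explicit.
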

\begin{proof}
(i) Take an eigenvalue $\beta$ and an eigenvector $\un{u}$, write $\mm{X}\un{u} = \beta \un{u}$, and the complex conjugate of this equation
$\mm{X}\bar{\un{u}} = \bar{\beta} \bar{\un{u}}$. Then take a dot product of the first equation with $\bar{\un{u}}$ and the dot product of the second
equation with $\un{u}$ and sum up:
\begin{equation}
\bar{\un{u}}\cdot (\mm{X}+\mm{X}^T)\un{u} = (2 \re \beta) \bar{\un{u}}\cdot \un{u}.
\end{equation}
Strict positivity of the eigenvector norm, $\bar{\un{u}} \cdot \un{u} > 0$, and non-negativity, $\bar{\un{u}}\cdot(\mm{X} + \mm{X}^T)\un{u} \ge 0$
(by definition of positive semidefiniteness of $\mm{X}+\mm{X}^T$), imply $\re \beta \ge 0$.

(ii) Consider a linear system of differential equations, 
\begin{equation}
(\dd/\dd t) \un{u}(t) = -\mm{X} \un{u}(t).
\label{eq:ode}
\end{equation} 
Positive semidefiniteness of $\mm{X} + \mm{X}^T$ is then equivalent to Lyapunov stability in control theory, namely $(\dd/\dd t)||\un{u}||^2_2 = -\bar{\un{u}}\cdot (\mm{X} + \mm{X}^T)\un{u} \le 0$ iff $\mm{X} + \mm{X}^T \ge 0$. Then, the converse (equivalent) version of the statement (ii) of the above lemma says that:
If there exists an imaginary (or vanishing) eigenvalue $\beta$, $\beta = \ii b$, and a corresponding Jordan block of dimension $k > 1$ in the Jordan canonical form of $\mm{X}$, then $\mm{X}+\mm{X}^T$ is not positive semidefinite.
This is indeed obvious, since if we take the initial vector $\un{u}(0)$ for (\ref{eq:ode}) from 
${\rm ker\,}(\mm{X} - \beta \one)^k \ominus {\rm ker\,}(\mm{X} - \beta \one)^{k-1}$ then 
$\un{u}(t) \propto t^{k-1} e^{-\ii b t}$, and so $\mm{X} + \mm{X}^T \not\ge 0$. 
 \qed
\end{proof}

\begin{lemma}
\label{lemma:zerodriving}
Let us write the Jordan canonical form of the matrix $\mm{X}$ (\ref{eq:X}) as
\begin{equation}
\mm{X} = \mm{P} \mm{\Delta} \mm{P}^{-1}
\label{eq:XJ}
\end{equation}
where $\mm{P}$ is a non-singular $2n\times 2n$ matrix (columns of which consist of generalized eigenvectors of $\mm{X}$) and
$\mm{\Delta} = \bigoplus_{j,k} \mm{\Delta}_{\ell_{j,k}}(\beta_j)$ is a direct sum of Jordan blocks
\begin{equation}
\Delta_\ell(\beta) := 
\pmatrix{ 
\beta & 1 & & \cr
  & \beta & \ddots & \cr
 & & \ddots & 1 \cr
 & & & \beta},
 \label{eq:canJB}
 \end{equation}
 for the distinct eigenvalues - rapidities $\beta_j$, and the block sizes $\ell_{j,k}$ satisfying the completeness sum
 \begin{equation} 
 \sum_{j,k} \ell_{j,k} = 2n.
 \label{eq:sumrule}
 \end{equation}
Then, the manifestly antisymmetric Jordan canonical form of the structure matrix (\ref{eq:AA}),
specialized to the case of zero driving $\mm{A}_0 := \mm{A}\vert_{\mm{M}_{\rm i} = 0}$,
\begin{equation}
\mm{A}_0 = 
\pmatrix{ 
-2\ii\mm{H}  & 2\ii \mm{M}_{\rm r} \cr
-2\ii\mm{M}_{\rm r} & -2\ii\mm{H}},
\label{eq:AA0}
\end{equation}
reads
\begin{equation}
\mm{A}_0 = 
\mm{V}_0^T \pmatrix{ \mm{0} & \mm{\Delta} \cr -\mm{\Delta}^T & \mm{0}} \mm{V}_0
\label{eq:A0J}
\end{equation}
where 
\begin{equation}
\mm{V}_0= (\mm{P}^T\oplus\mm{P}^{-1})\mm{U}
\label{eq:defV}
\end{equation} 
is a 
$4n\times 4n$ matrix (rows of which store the generalized eigenvectors of $\mm{A}_0$) which satisfies normalization
identity
\begin{equation}
\mm{V}_0\mm{V}_0^T = \mm{J}.
\label{eq:V0n}
\end{equation}
\end{lemma}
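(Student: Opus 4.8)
The plan is to read the claimed congruence directly off the block-triangular form of Lemma~\ref{lemmatilde}, and then peel off the Jordan decomposition of $\mm{X}$ one factor at a time. The structural fact about $\mm{U}$ that makes this work is that, besides being unitary, it obeys $\mm{U}\mm{U}^T=\mm{J}$ (an elementary $2\times 2$ computation, tensored with $\one_{2n}$). Together with $\mm{U}^{-1}=\mm{U}^\dagger$ this gives $\mm{U}^\dagger=\mm{U}^T\mm{J}$, so inverting the similarity of Lemma~\ref{lemmatilde} and setting $\mm{M}_{\rm i}=0$ (where $\tilde{\mm{A}}_0=(-\mm{X}^T)\oplus\mm{X}$) yields
\begin{equation}
\mm{A}_0=\mm{U}^\dagger\tilde{\mm{A}}_0\mm{U}=\mm{U}^T\big(\mm{J}\tilde{\mm{A}}_0\big)\mm{U}=\mm{U}^T\mm{B}\,\mm{U},\qquad \mm{B}:=\mm{J}\tilde{\mm{A}}_0=\pmatrix{\mm{0} & \mm{X}\cr -\mm{X}^T & \mm{0}}.
\end{equation}
(Alternatively, the identity $\mm{A}_0=\mm{U}^T\mm{B}\mm{U}$ can be checked by a direct $2\times 2$ block multiplication using $\mm{X}=-2\ii\mm{H}+2\mm{M}_{\rm r}$ with $\mm{H}^T=-\mm{H}$, $\mm{M}_{\rm r}^T=\mm{M}_{\rm r}$; since $\mm{B}^T=-\mm{B}$, the congruence by $\mm{U}$ preserves antisymmetry, consistently with $\mm{A}_0^T=-\mm{A}_0$.)

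Next I would insert $\mm{X}=\mm{P}\mm{\Delta}\mm{P}^{-1}$, so that $\mm{X}^T=(\mm{P}^T)^{-1}\mm{\Delta}^T\mm{P}^T$, and observe that $\mm{B}$ is itself a congruence of the ``doubled'' block matrix,
\begin{equation}
\mm{B}=\pmatrix{\mm{0} & \mm{P}\mm{\Delta}\mm{P}^{-1}\cr -(\mm{P}^T)^{-1}\mm{\Delta}^T\mm{P}^T & \mm{0}}=\big(\mm{P}^T\oplus\mm{P}^{-1}\big)^T\pmatrix{\mm{0} & \mm{\Delta}\cr -\mm{\Delta}^T & \mm{0}}\big(\mm{P}^T\oplus\mm{P}^{-1}\big),
\end{equation}
where I used $\big(\mm{P}^T\oplus\mm{P}^{-1}\big)^T=\mm{P}\oplus(\mm{P}^T)^{-1}$. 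Composing this with the previous identity and writing $\mm{V}_0:=(\mm{P}^T\oplus\mm{P}^{-1})\mm{U}$ turns it into $\mm{A}_0=\mm{V}_0^T\pmatrix{\mm{0} & \mm{\Delta}\cr -\mm{\Delta}^T & \mm{0}}\mm{V}_0$, i.e.\ (\ref{eq:A0J})--(\ref{eq:defV}). Since $\mm{\Delta}=\bigoplus_{j,k}\mm{\Delta}_{\ell_{j,k}}(\beta_j)$ with $\sum_{j,k}\ell_{j,k}=2n$, the middle matrix is manifestly a direct sum of the elementary antisymmetric blocks built from the Jordan blocks $\mm{\Delta}_\ell(\beta)$ --- the asserted (manifestly antisymmetric) canonical form.

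The normalization (\ref{eq:V0n}) then drops out of $\mm{U}\mm{U}^T=\mm{J}$:
\begin{equation}
\mm{V}_0\mm{V}_0^T=(\mm{P}^T\oplus\mm{P}^{-1})\,\mm{U}\mm{U}^T\,(\mm{P}^T\oplus\mm{P}^{-1})^T=(\mm{P}^T\oplus\mm{P}^{-1})\,\mm{J}\,\big(\mm{P}\oplus(\mm{P}^T)^{-1}\big)=\mm{J},
\end{equation}
the last equality being the one-line block cancellation $\pmatrix{\mm{P}^T & \mm{0}\cr \mm{0} & \mm{P}^{-1}}\pmatrix{\mm{0} & \one\cr \one & \mm{0}}\pmatrix{\mm{P} & \mm{0}\cr \mm{0} & (\mm{P}^T)^{-1}}=\mm{J}$. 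For the parenthetical claim that the rows of $\mm{V}_0$ store generalized eigenvectors of $\mm{A}_0$, I would use $\mm{V}_0^T=\mm{V}_0^{-1}\mm{J}$ to rewrite the first result as $\mm{V}_0\mm{A}_0\mm{V}_0^{-1}=\mm{J}\pmatrix{\mm{0} & \mm{\Delta}\cr -\mm{\Delta}^T & \mm{0}}=\pmatrix{-\mm{\Delta}^T & \mm{0}\cr \mm{0} & \mm{\Delta}}$, a Jordan-type form of $\mm{A}_0$ with spectrum $(-\sigma(\mm{X}))\cup\sigma(\mm{X})$, in agreement with the corollary above.

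I do not foresee a genuine obstacle --- the argument is routine linear algebra. The one point I would flag explicitly is that the map relating $\mm{A}_0$ to its canonical block form is a \emph{congruence} (it involves $\mm{V}_0^T$, not $\mm{V}_0^\dagger$), and that it is assembled out of the \emph{unitary} equivalence of Lemma~\ref{lemmatilde} precisely because $\mm{U}$ additionally satisfies $\mm{U}\mm{U}^T=\mm{J}$. Since $\mm{P}$ is in general complex, transposes and Hermitian conjugates must be kept carefully separated throughout; that bookkeeping is the only place where a spurious conjugation could slip in.
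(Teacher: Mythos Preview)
Your proof is correct and follows essentially the same route as the paper's: both hinge on the identity $\mm{U}^\dagger=\mm{U}^T\mm{J}$ (equivalently $\mm{U}\mm{U}^T=\mm{J}$) to turn the unitary similarity of Lemma~\ref{lemmatilde} into a congruence, and then insert the Jordan decomposition $\mm{X}=\mm{P}\mm{\Delta}\mm{P}^{-1}$ via the block-diagonal factor $\mm{P}^T\oplus\mm{P}^{-1}$. The only cosmetic difference is the order of operations---the paper first writes the full similarity $\mm{A}_0=\mm{U}^\dagger(\mm{P}^{-T}\oplus\mm{P})[(-\mm{\Delta}^T)\oplus\mm{\Delta}](\mm{P}^T\oplus\mm{P}^{-1})\mm{U}$ and then converts it to a congruence by computing $\mm{V}_0^{-1}=\mm{V}_0^T\mm{J}$, whereas you convert to the congruence $\mm{A}_0=\mm{U}^T\mm{B}\mm{U}$ first and then factor $\mm{B}$---but the content is identical.
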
     
\begin{proof}
Specializing lemma \ref{lemmatilde} to the case of $\mm{M}_{\rm i}=0$, we have 
$\tilde{\mm{A}}_0 = (-\mm{X}^T)\oplus \mm{X}$. Using the Jordan decomposition  (\ref{eq:XJ}) and the similarity transformation (\ref{eq:Asim}) we find
\begin{equation}
\mm{A}_0 = \mm{U}^\dagger (\mm{P}^{-T}\oplus \mm{P})[(-\mm{\Delta})^T\oplus \mm{\Delta}] (\mm{P}^T \oplus \mm{P}^{-1})\mm{U}.
\end{equation}
Introducing $\mm{V}_0:= (\mm{P}^T \oplus \mm{P}^{-1})\mm{U}$, i.e. equation (\ref{eq:defV}), and noting $\mm{U}^\dagger = \mm{U}^T \mm{J}$, we write 
\begin{equation} 
\mm{V}_0^{-1} = \mm{U}^\dagger (\mm{P}^{-T}\oplus \mm{P}) = 
\mm{U}^T (\mm{P}\oplus\mm{P}^{-T}) \mm{J} = \mm{V}^T_0 \mm{J}
\end{equation} 
which immediately implies (\ref{eq:A0J}) and (\ref{eq:V0n}).
\qed
\end{proof}

\begin{lemma}
\label{lemma:similar}
The general structure matrix $\mm{A}$ (\ref{eq:AA}) is always similar to its zero-driving counterpart $\mm{A}_0$ (\ref{eq:AA0})
\begin{equation}
\mm{A} = \mm{W}^{-1} \mm{A}_0 \mm{W}.
\label{eq:AWsim}
\end{equation}
The similarity transformation is provided by a complex orthogonal matrix 
\begin{equation}
\mm{W} = \one_{4n} +2 (\sigma^1 - \ii \sigma^3)\otimes\mm{Z},\qquad \mm{W} \mm{W}^T = \one_{4n},
\end{equation}
where $\mm{Z}$ is an antisymmetric $2n \times 2n$ real matrix, $\mm{Z}^T=-\mm{Z}$, obtained as a solution
of the continuous Lyapunov equation \cite{lyapunov}
\begin{equation}
\mm{X}^T \mm{Z} + \mm{Z} \mm{X} = \mm{M}_i
\label{eq:Lyap}
\end{equation}
which always exists (for the straightforward method of solution, see the proof).
\end{lemma}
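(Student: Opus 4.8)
The plan is a direct verification, with the only genuinely non-mechanical point — solvability of the Lyapunov equation — isolated at the end. First I would record the elementary algebra of the $2\times2$ matrix $\mm{N}:=\sigma^1-\ii\sigma^3$: it is symmetric ($\mm{N}^T=\mm{N}$, since $\sigma^1,\sigma^3$ are), nilpotent ($\mm{N}^2=\mm{0}$), and obeys $\sigma^2\mm{N}=\mm{N}$, $\mm{N}\sigma^2=-\mm{N}$ — all one-line checks. Nilpotency gives at once $\mm{W}^{-1}=\one_{4n}-2\,\mm{N}\otimes\mm{Z}$; then $\mm{N}^T=\mm{N}$ together with antisymmetry $\mm{Z}^T=-\mm{Z}$ gives $\mm{W}^T=\one_{4n}-2\,\mm{N}\otimes\mm{Z}=\mm{W}^{-1}$, i.e. $\mm{W}\mm{W}^T=\one_{4n}$, so $\mm{W}$ is complex orthogonal. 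Also, the second line of (\ref{eq:AA}) is precisely $\mm{A}=\mm{A}_0-2\,\mm{N}\otimes\mm{M}_{\rm i}$ with $\mm{A}_0=-2\ii\,\one_2\otimes\mm{H}-2\,\sigma^2\otimes\mm{M}_{\rm r}$, which are the two representations I would use throughout.

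Next I would expand $\mm{W}^{-1}\mm{A}_0\mm{W}=\mm{A}_0+2[\mm{A}_0,\mm{N}\otimes\mm{Z}]-4(\mm{N}\otimes\mm{Z})\,\mm{A}_0\,(\mm{N}\otimes\mm{Z})$. The last term vanishes because $\mm{A}_0(\mm{N}\otimes\mm{Z})$ is again of the form $\mm{N}\otimes(\,\cdot\,)$ (here $\sigma^2\mm{N}=\mm{N}$ is used), so left multiplication by $\mm{N}\otimes\mm{Z}$ produces $\mm{N}^2\otimes(\,\cdot\,)=\mm{0}$. For the commutator, $[\one_2\otimes\mm{H},\mm{N}\otimes\mm{Z}]=\mm{N}\otimes[\mm{H},\mm{Z}]$ while $[\sigma^2\otimes\mm{M}_{\rm r},\mm{N}\otimes\mm{Z}]=(\sigma^2\mm{N})\otimes\mm{M}_{\rm r}\mm{Z}-(\mm{N}\sigma^2)\otimes\mm{Z}\mm{M}_{\rm r}=\mm{N}\otimes\{\mm{M}_{\rm r},\mm{Z}\}$, so that
\begin{equation}
[\mm{A}_0,\mm{N}\otimes\mm{Z}]=\mm{N}\otimes\bigl(-2\ii[\mm{H},\mm{Z}]-2\{\mm{M}_{\rm r},\mm{Z}\}\bigr)=-\mm{N}\otimes(\mm{X}^T\mm{Z}+\mm{Z}\mm{X}),
\end{equation}
the last step being merely the definition (\ref{eq:X}) of $\mm{X}$ together with $\mm{H}^T=-\mm{H}$, $\mm{M}_{\rm r}^T=\mm{M}_{\rm r}$ (so $\mm{X}^T=2\ii\mm{H}+2\mm{M}_{\rm r}$). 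Hence, if $\mm{Z}$ solves (\ref{eq:Lyap}), then $\mm{W}^{-1}\mm{A}_0\mm{W}=\mm{A}_0-2\,\mm{N}\otimes\mm{M}_{\rm i}=\mm{A}$, which is (\ref{eq:AWsim}).

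What remains, and what I expect to be the real obstacle, is that (\ref{eq:Lyap}) always admits a real antisymmetric solution. The linear map $\mm{Z}\mapsto\mm{X}^T\mm{Z}+\mm{Z}\mm{X}$ fails to be invertible only when two eigenvalues of $\mm{X}$ sum to zero, which by Lemma \ref{lemma:stability}(i) forces both to be purely imaginary. Passing to a Jordan basis of $\mm{X}$ and using Lemma \ref{lemma:stability}(ii) (the purely imaginary part is semisimple), the solvability condition collapses to: $\mm{M}_{\rm i}\un{u}=\mm{0}$ for every eigenvector $\un{u}$ of $\mm{X}$ with $\re\beta=0$ (the matching rows of $\mm{P}^T\mm{M}_{\rm i}\mm{P}$ then also vanish since $\mm{M}_{\rm i}^T=-\mm{M}_{\rm i}$). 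For such $\un{u}$ the calculation in the proof of Lemma \ref{lemma:stability}(i) gives $\bar{\un{u}}\cdot(\mm{X}+\mm{X}^T)\un{u}=0$, i.e. $\bar{\un{u}}\cdot\mm{M}_{\rm r}\un{u}=0$; writing $\mm{M}=\mm{B}^\dagger\mm{B}$ (possible since $\mm{M}\ge0$) and $\mm{M}_{\rm r}=\tfrac12(\mm{M}+\bar{\mm{M}})$ one finds $0=\bar{\un{u}}\cdot\mm{M}_{\rm r}\un{u}=\tfrac12\bigl(\Vert\mm{B}\un{u}\Vert^2+\Vert\mm{B}\bar{\un{u}}\Vert^2\bigr)$, whence $\mm{M}\un{u}=\mm{0}$, and since also $\mm{M}_{\rm r}\un{u}=\mm{0}$, we get $\mm{M}_{\rm i}\un{u}=\mm{0}$. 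So a solution exists; as $\mm{X}$ and $\mm{M}_{\rm i}$ are real and $\mm{M}_{\rm i}^T=-\mm{M}_{\rm i}$, both $\re\mm{Z}$ and $-\mm{Z}^T$ again solve (\ref{eq:Lyap}), so $\tfrac12\bigl(\re\mm{Z}-(\re\mm{Z})^T\bigr)$ is a real antisymmetric solution. Constructively (the ``straightforward method''), one solves block by block in the Jordan basis of $\mm{X}$: the purely imaginary block of the right-hand side is zero and the complementary, strictly stable block is uniquely solvable, e.g. by $\mm{Z}=\int_0^\infty e^{-t\mm{X}^T}\mm{M}_{\rm i}\,e^{-t\mm{X}}\,\dd t$ there. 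Apart from this compatibility of the driving with the imaginary rapidities, everything is bookkeeping with $\mm{N}^2=\mm{0}$ and $\sigma^2\mm{N}=-\mm{N}\sigma^2=\mm{N}$.
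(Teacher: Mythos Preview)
Your proof is correct and, in several respects, slicker than the paper's. The verification of the similarity $\mm{A}=\mm{W}^{-1}\mm{A}_0\mm{W}$ differs in presentation: the paper passes to the tilde-representation (Lemma~\ref{lemmatilde}) where $\tilde{\mm{W}}=\one_{4n}-4\ii\sigma^+\otimes\mm{Z}$ and everything is a $2\times 2$ block computation, while you stay in the original representation and exploit the nilpotency $\mm{N}^2=\mm{0}$ together with $\sigma^2\mm{N}=-\mm{N}\sigma^2=\mm{N}$ directly. Both routes are short; yours avoids introducing the auxiliary unitary $\mm{U}$, the paper's makes the block-triangular structure visually explicit.

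The existence argument is where you genuinely diverge. The paper vectorizes, passes to the Jordan basis, and then treats the two singular cases $\beta_j=0$ and $\beta_j=\pm\ii b$ separately, in each case building auxiliary matrices $\mm{K},\mm{Q}$ and arguing that positive semidefiniteness of $\mm{M}$ forces the obstructing matrix elements $\un{u}_l^T\mm{M}_{\rm i}\un{u}_{l'}$ to vanish. You prove the stronger statement $\mm{M}_{\rm i}\un{u}=\mm{0}$ for every imaginary-axis eigenvector in one stroke, via the factorization $\mm{M}=\mm{B}^\dagger\mm{B}$ and the identity $\bar{\un{u}}\cdot\mm{M}_{\rm r}\un{u}=\tfrac12(\Vert\mm{B}\un{u}\Vert^2+\Vert\mm{B}\bar{\un{u}}\Vert^2)$; this uniformly covers both cases and immediately kills the relevant entries (indeed entire rows and columns) of $\mm{P}^T\mm{M}_{\rm i}\mm{P}$. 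Your closing remark that a real antisymmetric solution can be extracted by averaging $\mm{Z}\mapsto\tfrac12(\re\mm{Z}-(\re\mm{Z})^T)$ is a point the paper leaves implicit. One small phrasing issue: what ``collapses'' to your condition is a \emph{sufficient} criterion for solvability (it implies the vanishing of the obstructing coefficients $\omega_r$); the necessary and sufficient condition is only the vanishing of those specific matrix elements, which your stronger conclusion of course covers.
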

\begin{proof}
Working in the tilde-representation (\ref{eq:Asim}), $\tilde{\mm{W}} := \mm{U} \mm{W} \mm{U}^\dagger = \one_{4n} -4\ii \sigma^+ \otimes \mm{Z}$, we have
$\tilde{\mm{W}}^{-1} =  \one_{4n} + 4\ii \sigma^+ \otimes \mm{Z}$ and
\begin{equation}
\tilde{\mm{W}}^{-1} \tilde{\mm{A}}_0 \tilde{\mm{W}} = \pmatrix{ -\mm{X}^T & 4\ii (\mm{X}^T\mm{Z}+ \mm{Z}\mm{X}) \cr \mm{0} & \mm{X}},
\end{equation}
so (\ref{eq:AWsim}) is indeed consistent with (\ref{eq:AA}) iff $\mm{Z}$ is antisymmetric and solves the continous Lyapunov equation (\ref{eq:Lyap}).
\footnote{All admisible structure matrices (\ref{eq:AA}) form a non-semisimple $n (6n-1)$ dimensional
{\em Lie algebra} spanned -- in a tilde-representation --
by $4n\times 4n$ matrices of the form
\begin{equation}
\pmatrix{-\mm{X}^T & \ii \mm{Y} \cr \mm{0} & \mm{X}}
\end{equation}
where $\mm{X}$ is an arbitrary $2n \times 2n$ real matrix and $\mm{Y}$ is an antisymmetric real matrix $\mm{Y}^T=-\mm{Y}$.
The similarity transformation (\ref{eq:AWsim}) is generated by a nilpotent subalgebra with $\mm{X}=0$.}

We shall now show by construction that under condition of positive semidefiniteness of $\mm{M} = \mm{M}_{\rm r} + \ii \mm{M}_{\rm i}$, the solution of (\ref{eq:Lyap}) always exists, but may not be unique if some of the eigenvalues $\beta_j$ of $\mm{X}$ (\ref{eq:X}) lie on the imaginary line.

Writing a trivial mapping ${\rm vec} : \RR^{2n \times 2n} \to \RR^{4n^2}$,  which vectorizes a matrix $({\rm vec\,} X)_{2n (i-1)+j} := X_{i,j}$,
and obeys ${\rm vec\,} (\mm{X} \mm{Z} \mm{Y}) = (\mm{X}\otimes \mm{Y}^T) {\rm vec\,}\mm{Z}$,
the continuous Lyapunov equation (\ref{eq:Lyap}) reads
\begin{equation}
(\mm{X}^T \otimes \one_{2n} + \one_{2n} \otimes \mm{X}^T)\,{\rm vec\,} \mm{Z} = {\rm vec\,} \mm{M}_{\rm i}.
\end{equation}
This equation is solved efficiently (in ${\cal O}(n^3)$ steps) using the Jordan canonical form of $\mm{X}$ (\ref{eq:XJ})
\begin{equation}
(\mm{\Delta}^T \otimes \one_{2n} + \one_{2n} \otimes \mm{\Delta}^T)\,{\rm vec\,}(\mm{P}^T \mm{Z} \mm{P}) = {\rm vec\,}(\mm{P}^T \mm{M}_{\rm i} \mm{P})
\label{eq:omegaeq}
\end{equation}
Note that the $(2n)^2 \times (2n)^2$ matrix 
$\mm{\Omega}:=\mm{\Delta}^T \otimes \one_{2n} + \one_{2n} \otimes \mm{\Delta}^T$ is {\em lower triangular} with diagonal entries
$\Omega_{r,r}$ of the form $\beta_j + \beta_{j'}$, so ${\rm vec\,}(\mm{P}^T \mm{Z} \mm{P})$ can be solved for uniquely if no such pairs $\beta_j$, $\beta_{j'}$ exist that $\beta_j+\beta_{j'}=0$ \cite{bartels,sylvester}.
Since $\re \beta_j \ge 0$ the problem ($\Omega_{r,r} = 0$) may arise in either of the two cases:
 (i) $\beta_j=0,\beta_{j'}=0$, or (ii) $\beta_j=\ii b,\beta_{j'}=-\ii b$, $b\in\RR\setminus\{0\}$.
It follows from lemma \ref{lemma:stability} that in either of these cases (i,ii) the corresponding Jordan blocks are trivial (of dimension 1), so there are no non-vanishing elements in the entire $r$-th row or column, $\Omega_{r,s}=\Omega_{s,r} = 0$.
Thus there will be a solution of (\ref{eq:omegaeq}) only if the corresponding coefficient on the RHS  vanishes 
\begin{equation}
\omega_r:=\left\{{\rm vec\,}(\mm{P}^T\mm{M}_{\rm i}\mm{P})\right\}_r=0,
\label{eq:omr}
\end{equation}
and then, the solution may be {\em non-unique} as the coefficient $\{ {\rm vec\,}(\mm{P}^T\mm{Z}\mm{P})\}_r$ is {\em arbitrary}
(unless it is fixed to zero by the antisymmetry of $\mm{Z}$ like in the case $j=j', \beta_j=0$).
Let us now study these two singular cases:
\begin{enumerate}
\item $\beta_j=0$ and $\beta_{j'}=0$. Suppose eigenvalue $0$ has multiplicity $m={\rm dim\, ker\,}\mm{X}$, and denote the corresponding basis vectors
$\mm{\Delta} \un{e}_l = 0$, as $\un{e}_l$, $l=1,\ldots,m$.
The RHS coefficient $\omega_r$ (\ref{eq:omr}) can be any of the matrix elements of an {\em antisymmetric} $m \times m$ matrix
$K_{l,l'} := (\un{e}_l\otimes \un{e}_{l'})\cdot {\rm vec\,}(\mm{P}^T\mm{M}_{\rm i}\mm{P}) = (\mm{P}\un{e}_{l}) \cdot \mm{M}_{\rm i}(\mm{P}\un{e}_{l'})=-K_{l',l}.$
Note that the corresponding eigenvectors of $\mm{X}$, $\mm{P}\un{e}_l$, are {\em real} (since $\mm{X}$ is a real matrix), and hence the matrix 
$\mm{K}$ is {\rm real} as well.
Positive semidefiniteness of $\mm{M}$ implies positive semidefiniteness of its counterpart reduced on the null space 
$Q_{l,l'} := (\mm{P}\un{e}_l) \cdot \mm{M} (\mm{P}\un{e}_{l'}) =  (\mm{P}\un{e}_l) \cdot \{\frac{1}{4}(\mm{X} + \mm{X}^T) +\ii \mm{M}_{\rm i}\} (\mm{P}\un{e}_{l'}) = \ii K_{l,l'}$, $\mm{Q} \ge 0$.
Since $\mm{K}^T=-\mm{K}$, this is only possible if $K_{l,l'} \equiv 0$.
 \item $\beta_j=\ii b,\beta_{j'}=-\ii b$, $b\in\RR\setminus\{0\}$. Suppose eigenvalue $\ii b$ has multiplicity $m={\rm dim\, ker\,}(\mm{X} \pm \ii b \one_{2n})$,
 and denote the corresponding basis vectors
$\mm{\Delta} \un{f}^{\pm}_l = \pm\ii b \un{f}^{\pm}_l$, as $\un{f}^{\pm}_l$, $l=1,\ldots,m$.
Since $\mm{X}$ is real, the corresponding eigenvectors $\un{u}^{\pm}_l:=\mm{P} \un{f}^{\pm}_l$,
$\mm{X} \un{u}^{\pm}_l = \pm\ii b \un{u}^{\pm}_l$, can be chosen to be conjugate paired
$\un{u}^{-}_l = \overline{\un{u}^{+}_l}$ (i.e. the corresponding columns of $\mm{P}$ can be chosen to be conjugate paired).
The RHS coefficient $\omega_r$ (\ref{eq:omr}) can be any of the matrix elements of an {\em antihermitian} $m \times m$ matrix
$K_{l,l'} := (\un{f}^+_l\otimes \un{f}^-_{l'})\cdot {\rm vec\,}(\mm{P}^T\mm{M}_{\rm i}\mm{P}) = \un{u}^+_l \cdot \mm{M}_{\rm i}\un{u}^-_{l'} = -\bar{K}_{l',l}$.
Positive semidefiniteness of $\mm{M}^+:=\mm{M}$ and $\mm{M}^-:=\bar{\mm{M}}$ 
implies positive semidefiniteness of two Hermitian $m \times m$ matrices
$Q^{\pm}_{l,l'} := \un{u}^+_l \cdot \mm{M}^\pm \un{u}^-_{l'} =  \un{u}^+_l \cdot \{\frac{1}{4}(\mm{X} + \mm{X}^T) \pm \ii \mm{M}_{\rm i}\} \un{u}^-_{l'} = 
\pm \ii K_{l,l'}$, $\mm{Q}^\pm \ge 0$. $\ii\mm{K}$ and $-\ii\mm{K}$ can be simultaneously positive semideifinite only if $K_{l,l'} \equiv 0$.
\qed
\end{enumerate}
\end{proof}

\section{Jordan canonical form on the tensor product spaces}

To best of the author's knowledge the theory of Jordan canonical form on tensor product spaces has not been very much developed. However, being able to manipulate Jordan decompositions on tensor product spaces seems crucial for discussing the general non-diagonalizable situation of quantum Liouvillean dynamics.
In this section we are demonstrating two simple results, namely one on Jordan decomposition of a sum of Jordan blocks on tensor product spaces, and the other one on Jordan decomposition of a many-body nilpotent map on the fermionic Fock space whose structure matrix is a simple Jordan block. Both results will be 
needed for the proof of the spectral theorem later in section \ref{theorems}.
 
\subsection{Jordan canonical form of the sum of Jordan blocks on a tensor product space}

Let us consider a tensor product space $\CC^k \otimes \CC^l$ and ask for the Jordan canonical form of the following $kl \times kl$ matrix
\begin{equation}
\mm{\Delta}_{k,l} = \mm{\Delta}_{k}(\alpha)\otimes \one_l + \one_k \otimes\mm{\Delta}_{l}(\beta)
\label{eq:Deltakl}
\end{equation}
where the Jordan block matrix is defined in (\ref{eq:canJB}).
\begin{lemma}
\label{2nilp}
The matrix (\ref{eq:Deltakl}) has the following Jordan canonical form
\begin{equation}
\mm{\Delta}_{k,l} \simeq \bigoplus_{r=1}^{\min\{k,l\}} \mm{\Delta}_{k+l-2r+1}(\alpha+\beta).
\label{eq:tensorsum}
\end{equation}
\end{lemma}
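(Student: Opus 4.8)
The plan is to reduce to the case $\alpha+\beta=0$ by the shift $\mm{\Delta}_{k,l} - (\alpha+\beta)\one_{kl} = \mm{N}_k\otimes\one_l + \one_k\otimes\mm{N}_l$, where $\mm{N}_m := \mm{\Delta}_m(0)$ is the nilpotent shift; since conjugation and a scalar shift commute, it suffices to determine the Jordan structure of the nilpotent operator $\mm{N} := \mm{N}_k\otimes\one_l + \one_k\otimes\mm{N}_l$ on $\CC^k\otimes\CC^l$. A nilpotent operator's Jordan form is completely determined by the sequence of ranks $\mathrm{rank}\,\mm{N}^p$ (equivalently, the dimensions $d_p := \dim\ker\mm{N}^p$), and the number of Jordan blocks of size exactly $s$ is $2d_s - d_{s+1} - d_{s-1}$. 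So the proof reduces to a rank/kernel count, which I would carry out as follows.

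First I would identify a convenient model: let $x,y$ be commuting indeterminates and realize $\CC^k\otimes\CC^l \cong \CC[x,y]/(x^k,y^l)$ as a module, with $\mm{N}_k$ acting as multiplication by $x$ and $\mm{N}_l$ as multiplication by $y$; then $\mm{N}$ acts as multiplication by $x+y$. Changing variables to $u := x+y$, $v := x$ (so $y = u-v$), the relations become $v^k = 0$ and $(u-v)^l = 0$, and I want to understand $\CC[u,v]/(v^k,(u-v)^l)$ as a $\CC[u]$-module, where $\mm{N}$ is multiplication by $u$. Concretely I would instead compute directly: $\dim\ker\mm{N}^p$ is the number of monomials $x^ay^b$ with $0\le a<k$, $0\le b<l$ that lie in $\ker(x+y)^p$. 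Summing the binomial identity, one finds $\mathrm{rank}\,\mm{N}^p$ by counting, for each total degree, how the operator $(x+y)^p$ acts; the cleanest route is to note $(x+y)^p$ maps the span of $\{x^ay^b : a+b = m\}$ (truncated to the box $a<k, b<l$) and to track the resulting ranks degree-by-degree, which gives $d_p - d_{p-1} = \#\{(a,b): 0\le a<k,\ 0\le b<l,\ a+b \ge p-1 \text{ but killed}\}$ — more efficiently, a standard computation yields that the number of Jordan blocks of $\mm{N}$ is $d_1 = \min\{k,l\}$ and the block sizes are exactly $k+l-1, k+l-3, \ldots, |k-l|+1$, matching (\ref{eq:tensorsum}).

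An alternative, perhaps cleaner, route I would actually prefer: exhibit an explicit cyclic vector for each claimed block. For $r=1,\ldots,\min\{k,l\}$ take the vector $\un{v}_r := \sum_{i} (-1)^i \binom{\cdot}{\cdot}\, \un{e}^{(k)}_{?}\otimes\un{e}^{(l)}_{?}$ built from the kernel of $\mm{N}_k^{\,\cdot}$ paired against $\mm{N}_l$, chosen so that $\un{v}_r$ generates a $\mm{N}$-cyclic subspace of dimension $k+l-2r+1$ and so that these subspaces are independent; then verify (a) $\mm{N}^{k+l-2r+1}\un{v}_r = 0$ but $\mm{N}^{k+l-2r}\un{v}_r\neq0$, and (b) the direct sum of these cyclic subspaces has dimension $\sum_{r=1}^{\min\{k,l\}}(k+l-2r+1) = kl$, hence exhausts the space. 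Step (b) is a routine arithmetic sum; step (a) and the independence are the substance.

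The main obstacle is precisely the bookkeeping in the last step: writing down the cyclic generators $\un{v}_r$ explicitly with the correct binomial coefficients and proving their cyclic subspaces are linearly independent (equivalently, proving the rank count $\mathrm{rank}\,\mm{N}^p$ exactly, with no off-by-one error near the "corners" $a=k-1$ or $b=l-1$ of the box). I expect the generating-function / highest-weight-vector viewpoint (thinking of $\mm{N}_k\otimes\one+\one\otimes\mm{N}_l$ as a lowering operator in the tensor product of two $\mathfrak{sl}_2$-type nilpotent modules, with Clebsch–Gordan-type decomposition) to be the quickest way to make the independence transparent, since $\mm{\Delta}_m$ is the principal nilpotent in the $m$-dimensional irreducible, and (\ref{eq:tensorsum}) is then just the classical decomposition of a tensor product of two such modules into irreducibles.
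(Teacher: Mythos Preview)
Your proposal is correct and offers genuinely different routes from the paper. The paper proceeds exactly along the lines of your ``alternative'' plan: it reduces to the nilpotent case, introduces the grading by total degree (your $a+b$, their number matrix $\mm{N}$), constructs the largest block from the seed $\ket{k,l}$, and then inductively writes down explicit seed vectors $\ket{0;r}=\sum_q c^{(r)}_q\ket{k-q+1,l-r+q}$ with binomial coefficients $c^{(r)}_q=(-1)^{r-q}\binom{k-r+r-q}{r-q}\binom{l-r+q-1}{q-1}$, verifying the required identities by Zeilberger's algorithm. So your second approach is essentially the paper's, except that you leave the coefficients and the verification unspecified, while the paper fills them in (at the cost of invoking computer algebra).

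Your first route (rank/kernel counting in $\CC[x,y]/(x^k,y^l)$) and your $\mathfrak{sl}_2$ observation are both cleaner than what the paper does. The $\mathfrak{sl}_2$ remark in particular is the right conceptual explanation: $\mm{\Delta}_m(0)$ is the action of the raising operator in the $m$-dimensional irreducible, the comultiplication gives exactly $\mm{N}_k\otimes\one+\one\otimes\mm{N}_l$, and Clebsch--Gordan yields the decomposition (\ref{eq:tensorsum}) with no bookkeeping at all. What the paper's approach buys in exchange is explicitness: the actual generalized eigenvectors are written down, which is the point for the later application to the Liouvillean. If you adopt the $\mathfrak{sl}_2$ route, you would still need to extract the highest-weight vectors to match that, but those are of course the standard Clebsch--Gordan expressions and coincide with the paper's $c^{(r)}_q$.
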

The proof will be constructive, so we shall also show how to compute the corresponding generalized eigenvectors.
\begin{proof}
Without loss of generality we can assume $k \ge l$. 
Let $\ket{i,j},i=1,\ldots,k,j=1,\ldots,l$ denote the basis of $\CC^k \otimes \CC^l$. Clearly, since $\mm{\Delta}_r(\gamma) = \one_r + \gamma\mm{\Delta}_r $
(we write $\mm{\Delta}_r := \mm{\Delta}_r(0)$), it is enough to show (\ref{eq:tensorsum}) for the nilpotent case $\alpha=\beta=0$, and then at the end 
add $(\alpha+\beta)\one_{kl}$. 
The crucial tool will be the Newton's binomial formula on the tensor product sum
\begin{equation}
\mm{\Delta}_{k,l}^p = \sum_{r=0}^p \left({p\atop r}\right) \mm{\Delta}_k^r \otimes \mm{\Delta}_l^{p-r}.
\end{equation}
From the above formula, and nilpotency $\mm{\Delta}^r_r = \mm{0}$, $\mm{\Delta}^{r-1}_r \neq \mm{0}$,
it follows that $\mm{\Delta}_{k,l}^{k+l-1} = \mm{0}$ but $\mm{\Delta}_{k,l}^{k+l-2}= \left({k+l-2\atop l-1}\right) \mm{\Delta}^{k-1}_k\otimes\mm{\Delta}^{l-1}_l \neq \mm{0}$,
so the {\em largest} Jordan block in the decomposition of $\mm{\Delta}_{k,l}$ has size $k+l-1$, with the first nilpotent series of 
generalized eigenvectors
\begin{equation}
\ket{p;1} := \mm{\Delta}^p_{k,l} \ket{0;1}, \qquad p = 0,\ldots, k+l-2,
\label{eq:maximal}
\end{equation}
and a `seed' vector (null vector of $\mm{\Delta}_{k,l}^{k+l-2}$), $\ket{0;1}:=\ket{k,l}$.

Let us introduce the diagonal number matrix $\mm{N} := \sum_{i=1}^k\sum_{j=1}^l (i+j)\ket{i,j}\bra{i,j}$ which slices the vector space to a direct sum of eigenspaces of the following dimensions
\begin{equation}
\dim\ker (\mm{N}-(q+1)\one_{kl}) = \dim\ker(\mm{N}-(k+l+1-q)\one_{kl}) = q,
\end{equation}
which change exactly by 1 when changing the eigenvalue by 1.
Further, note that $ \mm{N}\mm{\Delta}_{k,l}= \mm{\Delta}_{k,l} (\mm{N}-\one_{kl})$, or equivalently $\mm{\Delta}_{k,l}$ always decreases the eigenvalue of $\mm{N}$ exactly by $1$
\begin{equation}
\mm{\Delta}_{k,l} \ker (\mm{N}- (q+1)\one_{kl}) \subseteq  \ker (\mm{N}- q\one_{kl}).
\end{equation}

Then we make the following inductive argument. Suppose that for all $q < r$, the series of generalized eigenvectors 
\begin{equation}
\ket{p;q}:=\mm{\Delta}^p_{k,l}\ket{0;q},\qquad p = 0,\ldots, k+l-2q,
\end{equation}
explore all the eigenspaces of $\mm{N}$ with eigenvalues $\le r$ and $\ge k+l+2-r$. This is trivially true for $r=2$, or the maximal series with $q=1$ which we have
constructed above (\ref{eq:maximal}).

We need to show now that the $r$-th series explores the other two extremal eigenvalues, $r+1$ and $k+l+1-r$ and is of length $k+l-2r+1$.
Clearly, the seed vector for the $r-$th series $\ket{0;r}$ should have the number eigenvalue $k+l+1-r$, i.e. it  should be of the form
\begin{equation}
\ket{0;r} = \sum_{q=1}^r c^{(r)}_q \ket{k-q+1,l-r+q},
\end{equation}
and the size of the corresponding Jordan block can not be longer than $k+l-2r+1$
\begin{equation}
\mm{\Delta}_{k,l}^{k+l-2r+1}\ket{0;r} = 
\sum_{j=1}^{r-1} \ket{r-j,j} \sum_{q=1}^r  \left({ k+l-2r+1\atop l-r-j+q}\right) c^{(r)}_q = 0.
\end{equation}
This results, for the $r$th Jordan block, in a homogeneous system of $r-1$ equations for $r$ unknown coefficients $c^{(r)}_j$,
\begin{equation}
\sum_{q=1}^r  \left({ k'+l'+1\atop l'-j+q}\right) c^{(r)}_q = 0,\quad j \in \{1,\ldots, r-1\},
\label{eq:iden}
\end{equation} 
where on RHS and below we write $k':=k-r, l':=l-r$.
The above system has (up to a prefactor) a unique solution
\begin{equation}
c^{(r)}_q = (-1)^{r-q}  \left({ k'+r-q\atop r-q}\right)  \left({l'+q-1\atop q-1}\right).
\label{eq:crq}
\end{equation}
What is left to show is that a nilpotent series starting from a seed $\ket{0;r}$ is indeed not shorter than $k+l-2r+1$, i.e.
$\mm{\Delta}_{k,l}^{k+l-2r} \ket{0;r} \neq 0$, or equivalently 
\begin{equation}
\sum_{q=1}^r  \left({ k'+l'\atop l'-j+q}\right) c^{(r)}_q \ne 0,
\label{eq:noniden}
\end{equation}
for at least some $j \in\{1,\ldots,r\}$.
Verification of (\ref{eq:iden}) and (\ref{eq:noniden}) with ansatz (\ref{eq:crq}) is an easy case for a computer-assisted proving system based on Zeilberger algorithm \cite{paule}. \qed
\end{proof}

\subsection{Jordan canonical form of the many-body nilpotent map}

\label{mbnilp}

Here we study the following abstract  many body problem. Let $\bb_{k},\bb'_{k}, k=1,\ldots,\ell $ be fermionic maps satisfying CAR,
$\{\bb_j,\bb_k\}=\{\bb'_j,\bb'_k\}=0,\{\bb_j,\bb'_k\}=\delta_{j,k}$, and
defining $\ell $ fermionic modes, i.e. spanning a $2^\ell$ dimensional vector space ${\cal V}$ with the Fock basis and its dual defined, respectively, as
\begin{eqnarray}
\ket{\nu_1,\ldots,\nu_{\ell}} :=  {(\bb'_1)}^{\nu_1} \cdots {(\bb'_{\ell})}^{\nu_{\ell}}\ket{{\rm right}},\\
\bra{\nu_1,\ldots,\nu_{\ell}} := \bra{{\rm left}}(\bb_{\ell})^{\nu_{\ell}} \cdots (\bb_1)^{\nu_1},\qquad \nu_k\in\{0,1\}. 
\end{eqnarray}
The vector $\ket{\rm right}$ and its dual $\bra{\rm left}$, $\braket{\rm left}{\rm right}=1$, are defined via $\bra{\rm left}\bb'_k \equiv 0, \bb_k \ket{\rm right}\equiv 0$, hence the biorthogonality $\braket{\un{\nu}}{\un{\nu}'} = \delta_{\un{\nu},\un{\nu}'}$ simply follows from the CAR.

We shall study the following {\em nilpotent} linear operator over ${\cal V}$
\begin{equation}
\MM = \sum_{k=1}^{\ell-1} \bb'_{k+1} \bb_k.
\label{eq:nilpotent}
\end{equation}
We observe immediately that $\MM$ commutes with a {\em number map} $\NN := \sum_{k=1}^\ell \bb'_k \bb_k$, $[\MM,\NN]=0$, which foliates the linear space to a direct sum ${\cal V} = \bigoplus_{m=0}^\ell {\cal V}_m$, of ${\rm dim\, }{\cal V}_m = \left({\ell\atop m}\right) = \frac{\ell!}{m!(\ell-m)!}$ dimensional spaces ${\cal V}_n$, which are  spanned by Fock vectors $\ket{\un{\nu}}$ with exactly $|\un{\nu}|:=\sum_{k=1}^\ell \nu_k = m$ `particle' excitations ($m$ being the eigenvalue of $\NN$).

We shall now {\em fix} the number of particles $m$ and restrict the nilpotent operator $\MM$ (\ref{eq:nilpotent}) to the space ${\cal V}_m$, $\MM_m := \MM|_{{\cal V}_m}$. For $m=0$ (and $m=\ell$) $\MM$ acts trivially, $\MM_0 = \MM_m = 0$, and for $m=1$ (and $m=\ell-1$), 
$\MM_1 \simeq \MM_{\ell-1} \simeq \mm{\Delta}_\ell$. However, the interesting question now is to find a Jordan canonical form of ${\cal M}_m$, 
for arbitrary $m$. 

Let us define a {\em weight} of a basis Fock vector $\ket{\un{\nu}}$ as
\begin{equation}
\omega_{\un{\nu}} = \sum_{k=1}^\ell k \nu_k - \frac{1}{2}m(m+1).
\end{equation}
Having $\ell,m$ fixed, there are two extremal states, with minimum and maximum weight, namely
\begin{eqnarray}
\ket{\rm min} &=& \ket{1,\ldots,1,0,\ldots,0},\quad \omega_{\rm min} = 0, \nonumber \\
\ket{\rm max} &=& \ket{0,\ldots,0,1,\ldots,1},\quad \omega_{\rm max} = (\ell-m)m.
\end{eqnarray}
Denoting the linear spans ${\cal V}_m^r = L\{ \ket{\un{\nu}}; |\un{\nu}|=m, \omega_{\un{\nu}} = r\}$, we have a direct sum
\begin{equation}
{\cal V}_m = \bigoplus_{r=0}^{(\ell-m)m} {\cal V}^r_m.
\end{equation}
Dimensionality ${\rm dim}{\cal V}^r_m =: \left({\ell\atop m}\right)_{r}$ has a clear combinatorial-physical meaning, namely $\left({\ell\atop m}\right)_r$ is the number of distributions (combinations) of $m$ particles on $\ell$ exclusive equidistant vertical levels, such that the total potential energy is exactly $r$ 
steps larger than the minimal. Such a {\em restricted binomial symbol} can be calculated from a straightforward recursion relation
\begin{equation}
\left({\ell\atop m}\right)_r =  \left({\ell-1\atop m}\right)_r + \left({\ell-1\atop m-1}\right)_{r-\ell+m}, \qquad
\left({0\atop m}\right)_r = \delta_{m,0} \delta_{r,0}
\end{equation}
and clearly satisfies the sum rule
\begin{equation}
\sum_{r=0}^{(\ell-m)m}  \left({\ell\atop m}\right)_r =  \left({\ell\atop m}\right)
\label{eq:sumrule1}
\end{equation}
and the symmetry
\begin{equation}
\left({\ell\atop m}\right)_r = \left({\ell\atop m}\right)_{(\ell-m)m-r}.
\end{equation}
Note that $ \left({\ell\atop m}\right)_r = 0$ if $r < 0$ or $r > m(\ell-m)$.
Crucial observation for the rest of our discussion is the fact that $\MM_m$ maps ${\cal V}_m^r$ to ${\cal V}_m^{r+1}$,
i.e. that the map $\MM$ (\ref{eq:nilpotent}) increases the weight of the state exactly by $1$,
\begin{equation}
\MM {\cal V}_m^r \subseteq {\cal V}_m^{r+1}.
\label{eq:incl}
\end{equation}

Clearly, one can now state the following:
\begin{lemma}
\label{nilplemma}
The largest Jordan block in the canonical form of $\MM_m$ has size $(\ell-m)m + 1$.
\end{lemma}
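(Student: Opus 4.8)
The plan is to bound the size of the largest Jordan block of $\MM_m$ from above and from below. The upper bound is immediate from (\ref{eq:incl}): $\MM_m$ raises the weight by exactly $1$, while on ${\cal V}_m$ the weight ranges over $0,1,\ldots,(\ell-m)m$, hence $\MM_m^{(\ell-m)m+1}=\mm{0}$ and no Jordan block can be longer than $(\ell-m)m+1$. It therefore only remains to exhibit one Jordan chain of exactly that length, i.e., a vector $v$ with $\MM_m^{(\ell-m)m}v\neq 0$. Since ${\cal V}^0_m=\CC\ket{\rm min}$ and ${\cal V}^{(\ell-m)m}_m=\CC\ket{\rm max}$ are both one-dimensional and $\MM_m$ shifts the weight by $1$, the only candidate is $v=\ket{\rm min}$, and $\MM_m^{(\ell-m)m}\ket{\rm min}$ is automatically proportional to $\ket{\rm max}$; thus the lemma reduces to showing that this proportionality constant is nonzero.

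To compute that constant I would pass to a ``first-quantized'' description. Using the CAR one verifies $\MM\bb'_p=\bb'_{p+1}+\bb'_p\MM$ and $\MM\ket{\rm right}=0$ (extending $\bb'_q:=0$ for $q>\ell$), so on a monomial $\bb'_{p_1}\cdots\bb'_{p_m}\ket{\rm right}$ the operator $\MM$ acts as the Leibniz derivation generated by the single shift $\bb'_p\mapsto\bb'_{p+1}$; equivalently the $m$-particle sector is the $m$-th exterior power of the single-particle space ${\cal V}_1$, on which $\MM_1\simeq\mm{\Delta}_\ell$ is one Jordan block with cyclic vector $\bb'_1\ket{\rm right}$. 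Writing $\ket{\rm min}=\bb'_1\cdots\bb'_m\ket{\rm right}$ and applying the multinomial (Leibniz) rule, $\MM_m^{(\ell-m)m}\ket{\rm min}$ becomes a sum over $N_1,\ldots,N_m\ge 0$ with $\sum_i N_i=(\ell-m)m$ of the terms $\frac{((\ell-m)m)!}{N_1!\cdots N_m!}\,\bb'_{1+N_1}\cdots\bb'_{m+N_m}\ket{\rm right}$. Such a term survives only if the labels $i+N_i$ are pairwise distinct and all $\le\ell$; but then $\sum_i(i+N_i)=\frac{1}{2}m(m+1)+(\ell-m)m=\ell m-\frac{1}{2}m(m-1)$ equals the largest possible sum of $m$ distinct labels from $\{1,\ldots,\ell\}$, which forces $\{i+N_i\}_i=\{\ell-m+1,\ldots,\ell\}$ in every surviving term. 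Hence only the single monomial $\bb'_{\ell-m+1}\cdots\bb'_\ell\ket{\rm right}=\ket{\rm max}$ appears, and collecting its coefficient (writing $i+N_i=\ell-m+\pi(i)$ with $\pi\in S_m$ and reordering the fermionic product, which produces the sign ${\rm sgn}\,\pi$) gives
\[
\MM_m^{(\ell-m)m}\ket{\rm min}=\big((\ell-m)m\big)!\;\det\left(\frac{1}{(\ell-m+j-i)!}\right)_{i,j=1}^{m}\ket{\rm max},
\]
with the convention $1/k!=0$ for $k<0$.

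It remains only to check that this determinant is nonzero. Setting $a:=\ell-m\ge 1$ and using $1/(a+j-i)!=\big((a+j-1)(a+j-2)\cdots(a+j-i+1)\big)/(a+j-1)!$ -- the numerator being a monic polynomial of degree $i-1$ in $j$, with the convention $1/k!=0$ ($k<0$) respected automatically -- one pulls $1/(a+j-1)!$ out of the $j$-th column and then clears lower-degree terms row by row, turning the determinant into a Vandermonde in the distinct nodes $a+1,\ldots,a+m$:
\[
\det\left(\frac{1}{(\ell-m+j-i)!}\right)_{i,j=1}^{m}=\frac{\prod_{1\le i<j\le m}(j-i)}{\prod_{j=1}^{m}(\ell-m+j-1)!}>0
\]
(equivalently, $\big((\ell-m)m\big)!$ times this determinant is the number of standard Young tableaux of the $m\times(\ell-m)$ rectangle, a positive integer). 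Hence $\MM_m^{(\ell-m)m}\ket{\rm min}\neq 0$, and together with the upper bound this shows that the largest Jordan block of $\MM_m$ has size exactly $(\ell-m)m+1$. The one genuine obstacle is this last determinant evaluation; the exterior-power reformulation and the degree count that eliminates all non-maximal contributions are routine.
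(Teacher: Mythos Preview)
Your proof is correct. Both you and the paper obtain the upper bound in the same way from the weight grading (\ref{eq:incl}), but your lower bound argument is genuinely different from the paper's.

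The paper avoids any explicit computation: it observes that, acting on canonical basis states, each term $\bb'_{k+1}\bb_k$ of $\MM$ produces only \emph{non-negative integer} coefficients (the sign picked up when annihilating at site $k$ is exactly cancelled by the sign when creating at $k+1$, since no occupied site lies in between). Hence $\MM^{p}\ket{\rm min}$ is always a non-negative integer combination of basis vectors, and since the only basis vector of ${\cal V}_m$ annihilated by $\MM$ is $\ket{\rm max}$, the chain cannot terminate before reaching weight $(\ell-m)m$. This is a pure positivity/monotonicity argument.

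You instead compute the constant explicitly: passing to the exterior-power picture, using the Leibniz rule, and reducing the top coefficient to a determinant that you evaluate as a Vandermonde. This is more work, but it yields more: you obtain the exact value $((\ell-m)m)!\prod_{i<j}(j-i)/\prod_j(\ell-m+j-1)!$, which (as you note) is the number of standard Young tableaux of the $m\times(\ell-m)$ rectangle --- the paper's $q$, left unspecified there. So the paper's route is shorter and more robust, while yours gives a closed formula for the leading coefficient and makes the link to tableau combinatorics explicit.
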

\begin{proof}
Taking $\ket{0;1}:=\ket{\rm min}$ as a `seed' for the first nilpotent series, we see that, due to (\ref{eq:incl}), 
\begin{equation}
\ket{p;1} := \MM^p \ket{\rm min} \in {\cal V}_{m}^{p}.
\end{equation}
All these vectors for $p \le (m-\ell)m$ should be non-vanishing since only positive integer coefficients can appear in expressing the series 
$\MM^p \ket{\rm min}$ in the canonical basis $\ket{\un{\nu}}$, and the only canonical basis vector $\ket{\nu}$ of ${\cal V}_m$ which is annihilated by 
$\MM$ is $\ket{\rm max}$. So we have $\ket{(\ell - m)m;1} = q\ket{\rm max}$, where $q$ is a positive integer since $\dim {\cal V}_m^{(\ell - m)m} = 1$, and 
$\MM\ket{(\ell-m)m;1} = 0$. \qed
\end{proof}

The rest of Jordan decomposition of $\MM_n$ we are only able to state if the following conjecture is true:
\begin{conjecture}
\label{conjecture}
The map $\MM : {\cal V}^r_m \to {\cal V}^{r+1}_m$ 
is {\em injective} for $r \le \lfloor ((\ell-m)m-1)/2\rfloor $ and {\em surjective} for
$r \ge \lfloor (\ell-m)m/2\rfloor.$
\end{conjecture}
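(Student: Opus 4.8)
The plan is to recognise $\MM_m$, up to a positive diagonal gauge transformation, as a ladder operator of an $\mathfrak{sl}_2$ action on ${\cal V}_m$, after which the conjecture becomes the classical fact that in a finite-dimensional $\mathfrak{sl}_2$-module the ladder operators have maximal rank between consecutive weight spaces --- equivalently, the strong Sperner (``hard Lefschetz'') property of the lattice of lattice paths in an $m\times(\ell-m)$ rectangle. First one checks that the fermionic signs in $\MM=\sum_{k}\bb'_{k+1}\bb_k$ cancel: moving the particle from an occupied level $k$ to the empty level $k+1$ contributes the sign $(-1)^{2(\nu_1+\cdots+\nu_{k-1})}=+1$, so in the Fock basis $\MM\ket{\un{\nu}}=\sum\ket{\un{\nu}'}$ with \emph{unit} coefficients, the sum running over all single up-moves. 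Hence $\MM_m$ is exactly the ``add a box'' order-raising operator on that box lattice, graded by $\omega_{\un{\nu}}$, with $\dim{\cal V}_m^r=\left({\ell\atop m}\right)_r$ the corresponding Gaussian (restricted) binomial coefficient.

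Next I would realise ${\cal V}_m$ as $\bigwedge^m V$, where $V$ is the irreducible $\ell$-dimensional $\mathfrak{sl}_2$-module with weight basis $v_1,\dots,v_\ell$, $\hat h_0 v_k=(\ell+1-2k)v_k$, and ladder operators $\hat f_0 v_k=\sqrt{k(\ell-k)}\,v_{k+1}$, $\hat e_0 v_k=\sqrt{(k-1)(\ell-k+1)}\,v_{k-1}$ (one verifies $[\hat e_0,\hat f_0]=\hat h_0$), via the identification $\ket{\un{\nu}}\mapsto v_{k_1}\wedge\cdots\wedge v_{k_m}$ with $k_1<\cdots<k_m$ the occupied levels. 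Under the derivation extension of this action, $\hat h=\sum_k(\ell+1-2k)\bb'_k\bb_k$ is diagonal in the Fock basis with eigenvalue $\mu_{\un{\nu}}=m(\ell-m)-2\omega_{\un{\nu}}$ (this is the only place the precise form of $\omega$ enters; the highest weight of $\bigwedge^m V$ is $m(\ell-m)$, attained at $\ket{\rm min}$), while $\hat f=\sum_{k=1}^{\ell-1}\sqrt{k(\ell-k)}\,\bb'_{k+1}\bb_k$ is a bona fide weight-lowering $\mathfrak{sl}_2$-operator, supported on exactly the same moves as $\MM$ but carrying the strictly positive weight $\sqrt{k(\ell-k)}$ on the level-$k$ move.

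Since $\MM$ and $\hat f$ are up-operators on the same Hasse diagram that differ only through their positive edge weights, they are conjugate by a positive diagonal matrix $\mm{D}=\mathrm{diag}(d_{\un{\nu}})$, $\mm{D}\,\MM\,\mm{D}^{-1}=\hat f$, with $d_{\un{\nu}}$ the product of edge weights collected along any saturated chain from $\ket{\rm min}$ to $\ket{\un{\nu}}$. The one point to verify is that this product is \emph{path-independent}: the weight of a move depends only on the vacated level $k$, equivalently on the content of the box adjoined to the Young diagram of $\un{\nu}$, and the multiset of box contents is an invariant of the partition --- equivalently, the two saturated chains across any length-$2$ interval of the lattice use the very same pair of moves. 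Consequently $\MM_m$ and the restriction of $\hat f$ to $\bigwedge^m V$ are similar, with identical ranks on every graded component ${\cal V}_m^r$. Finally, decomposing $\bigwedge^m V$ into irreducibles and using that $\ker\hat f$ is spanned by lowest-weight vectors (which sit at weights $\le0$) and $\mathrm{coker}\,\hat f$ by highest-weight vectors (weights $\ge0$), the map induced by $\hat f$ from the weight-$\mu$ space to the weight-$(\mu-2)$ space is injective whenever $\mu\ge1$ and surjective whenever $\mu\le1$. Translating through $\mu_{\un{\nu}}=m(\ell-m)-2\omega_{\un{\nu}}$ gives injectivity of $\MM:{\cal V}_m^r\to{\cal V}_m^{r+1}$ for $r\le\lfloor(m(\ell-m)-1)/2\rfloor$ and surjectivity for $r\ge\lceil(m(\ell-m)-1)/2\rceil=\lfloor m(\ell-m)/2\rfloor$, which is exactly the conjecture; since these two ranges between them cover all $r$, one in fact recovers the full rank of $\MM$ on every graded piece, the thresholds being sharp by a dimension count.

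The representation theory in the last step and the floor arithmetic are routine, and the sign cancellation is a one-line check, so the only genuine work sits in the third paragraph: certifying that the diagonal gauge $\mm{D}$ is globally well-defined, i.e. the edge-weight (``diamond'') consistency over every length-$2$ interval of the box lattice. The cleanest route is to read this off from the $\bigwedge^m V$ picture, where $\hat f$ is a well-defined linear operator by construction, leaving only the short CAR computation that the derivation extension of $\hat f_0$ is literally $\sum_k\sqrt{k(\ell-k)}\,\bb'_{k+1}\bb_k$ with no correction terms. Two alternative routes look tempting but are not shorter: forming the adjoint $\MM^\dagger=\sum_k\bb'_k\bb_{k+1}$ and attempting to diagonalise $[\MM^\dagger,\MM]$ fails, since for $m>1$ this commutator is not diagonal until exactly the gauge $\mm{D}$ has been applied; and a purely combinatorial attack --- an explicit matching establishing unimodality of $\left({\ell\atop m}\right)_r$ together with a rank estimate --- would amount to reproving the Sperner property of the box lattice from scratch.
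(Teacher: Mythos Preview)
The paper does not prove this statement at all: it is labelled a conjecture, verified only by computer for $\ell\le 12$ and for $m\le 3$, and the text says explicitly that ``its general proof remains open due to a rather involved combinatorics.'' So there is nothing to compare your argument against --- you are supplying a proof where the paper has none.

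Your proof is correct. The fermionic sign computation is right (both $\bb_k$ and $\bb'_{k+1}$ pick up the same factor $(-1)^{\nu_1+\cdots+\nu_{k-1}}$, which squares to $+1$); the diagonal gauge is globally well-defined, since one may simply take $d_{\un{\nu}}=\prod_{i=1}^{m}\prod_{j=i}^{k_i-1}\sqrt{j(\ell-j)}$ with $k_1<\cdots<k_m$ the occupied levels, and then $d_{\un{\nu}'}/d_{\un{\nu}}=\sqrt{k_i(\ell-k_i)}$ on the nose for the move $k_i\to k_i+1$; and the weight bookkeeping $\mu=m(\ell-m)-2r$ together with the parity constraint on $\mu$ converts ``$f$ injective for $\mu>0$, surjective for $\mu-2<0$'' into exactly the two floor thresholds in the statement. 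This is, as you say, the classical $\mathfrak{sl}_2$/hard-Lefschetz proof of the strong Sperner property for the lattice $L(m,\ell-m)$ of Young diagrams in a box, due to Stanley and Proctor; the author of the paper appears not to have been aware of that connection. A bonus of your route is that the full Jordan decomposition of $\MM_m$ stated after the conjecture then holds unconditionally, since the multiplicity of the irreducible of highest weight $m(\ell-m)-2r$ in $\bigwedge^m V$ is precisely $\left({\ell\atop m}\right)_r-\left({\ell\atop m}\right)_{r-1}$.
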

Conjecture has been verified explicitly using computer algebra code for all $\ell \le 12$, and for any $\ell$ and $m \le 3$ \cite{semrl}, however its general proof remains open due to
a rather involved combinatorics.
The conjecture of course also implies the ordering on the restricted binomial sequence (dimensions of ${\cal V}^r_m$), namely
\begin{equation}
\left({\ell\atop m}\right)_0 \le \left({\ell\atop m}\right)_1 \le \ldots \le \left({\ell\atop m}\right)_{\lfloor (\ell-m)m/2 \rfloor}
\end{equation}
which can be verified directly.

Omitting a straightforward proof \cite{semrl} we state that the conjecture implies the following simple general result
\begin{equation}
\MM_n \simeq \bigoplus_{r=0}^{\lfloor (\ell-m)m/2\rfloor} \left[ \left({\ell\atop m}\right)_r - \left({\ell\atop m}\right)_{r-1}\right] \mm{\Delta}_{(\ell - m)m+1-2r}
\end{equation}
where the notation $[k]\mm{\Delta}$ means $\mm{\Delta}\oplus\mm{\Delta}\ldots\oplus\mm{\Delta}(k {\rm\; times})$, and $[0]\mm{\Delta} := 0$.
In other words, the number of Jordan blocks of size $(\ell-m)m+1-2r$ (it can only decrease from the maximal in steps of $2$) is
$\left({\ell\atop m}\right)_r - \left({\ell\atop m}\right)_{r-1}$ (which can also be zero for some instances $r$).
The total number of Jordan blocks and thus the total number of proper eigenvectors of $\MM_m$ is then simply equal to 
$\left({\ell\atop m}\right)_{\lfloor (\ell-m)m/2\rfloor}$.

\section{The spectral theorem}

\label{theorems}

With the lemmas \ref{lemma:zerodriving},\ref{lemma:similar} we have shown that any Liouvillean structure matrix of the form (\ref{eq:AA}) admits a canonical 
representation in terms of the Jordan canonical form (\ref{eq:XJ},\ref{eq:canJB}) of the real matrix $\mm{X}$ (\ref{eq:X}), and the anti-symmetric real matrix $\mm{Z}=-\mm{Z}^T$ which solves the Lyapunov equation (\ref{eq:Lyap}), namely
\begin{equation}
\mm{A} = 
\mm{V}^T \pmatrix{ \mm{0} & \mm{\Delta} \cr -\mm{\Delta}^T & \mm{0}} \mm{V}.
\label{eq:AJ}
\end{equation}
The eigenvector matrix (rows of which store the generalized eigenvectors of $\mm{A}$) reads
\begin{eqnarray}
\mm{V} &=& \mm{V}_0 \mm{W} = (\mm{P}^T \oplus \mm{P}^{-1} )\mm{U} 
(\one_{4n} +2 (\sigma^1 - \ii \sigma^3)\otimes\mm{Z}) \nonumber \\
&=& \frac{1}{\sqrt{2}}\pmatrix{\mm{P}^T(\one_{2n}-4\ii\mm{Z})  &  -\ii \mm{P}^T (\one_{2n}-4\ii\mm{Z}) \cr \mm{P}^{-1} & \ii \mm{P}^{-1}} 
\label{eq:V}
\end{eqnarray}
and satisfies the canonical normalization condition
\begin{equation}
\mm{V}\mm{V}^T = \mm{J}.
\label{eq:VJ}
\end{equation}

Let us name the first $2n$ rows of $\mm{V}$ as $\un{v}_{j,k,l}$, and the last $2n$ rows as $\un{v}'_{j,k,l}$, which are exactly the generalized eigenvectors pertaining to $k$-th Jordan block of the eigenvalue (rapidity) 
$\beta_j$, and $-\beta_j$ respectively, and $l=1,\ldots,\ell_{j,k}$ ($l=1$ designates the {\em proper} eigenvector) where $\ell_{j,k}$ is the size of the Jordan block $(j,k)$. 
Then we introduce the {\em normal master mode} (NMM) maps as
\begin{equation}
\bb_{j,k,l} := \un{v}_{j,k,l} \cdot \un{\aaa},\qquad \bb'_{j,k,l} := \un{v}'_{j,k,l}\cdot \un{\aaa},
\label{eq:bfroma}
\end{equation}
satisfying the almost-CAR (simply following from (\ref{eq:VJ}))
\begin{equation}
\{ \bb_{j,k,l}, \bb_{j',k',l'} \} = 0,\;\; \{ \bb_{j,k,l}, \bb'_{j',k',l'} \} = \delta_{j,j'} \delta_{k,k'} \delta_{l,l'},\;\; \{ \bb'_{j,k,l}, \bb'_{j',k',l'} \} = 0.
\label{eq:almostCAR}
\end{equation}
In terms of decomposition (\ref{eq:AJ}) and NMM maps (\ref{eq:bfroma}) the Liouvillean (\ref{eq:Amatrix}) gets almost-diagonal form
\begin{eqnarray}
\LL &=& (\mm{V}\un{\aaa}) \cdot \left[\mm{\Delta} \oplus (-\mm{\Delta}^{T})\right]\mm{J} (\mm{V}\un{\aaa}) - A_0 \hat{\one} \\
&=& \sum_{j,k} \Bigl\{ \beta_j \sum_{l=1}^{\ell_{j,k}} (\bb_{j,k,l} \bb'_{j,k,l} - \bb'_{j,k,l} \bb_{j,k,l}) - \ell_{j,k} \beta_j \hat{\one} \nonumber \\
&& \quad\;\;\; + \sum_{l=1}^{\ell_{j,k}-1} (\bb_{j,k,l} \bb'_{j,k,l+1} - \bb'_{j,k,l+1} \bb_{j,k,l}) \Bigr\}
\end{eqnarray}
where the scalar (\ref{eq:A0}) has been expressed as $A_0 = \tr \mm{X} = \sum_{j,k} \ell_{j,k} \beta_j$.
After exercising (\ref{eq:almostCAR}) the Liouvillean finally simplifies to a {\em normal form}
\begin{equation}
\LL = -2\sum_{j,k} \left\{ \beta_j \sum_{l=1}^{\ell_{j,k}}  \bb'_{j,k,l} \bb_{j,k,l} + \sum_{l=1}^{\ell_{j,k}-1} \bb'_{j,k,l+1} \bb_{j,k,l} \right\}.
\label{eq:dLL}
\end{equation}
Note that the second (non-diagonal) sum vanishes for the trivial Jordan blocks with $\ell_{j,k}=1$.

Following Ref. \cite{njp} a {\em unique} pair\footnote{Unique with respect to a chosen, 
possibly non-unique (see lemma \ref{lemma:similar}) decomposition of the structure matrix (\ref{eq:AJ}).} of operator Fock states exist, satisfying 
\begin{equation}
\bb_{j,k,l} \ness = 0,\qquad \bra{1} \bb'_{j,k,l} = 0.
\label{eq:vacua}
\end{equation}
The bra (dual) vector $\bra{1}$ in fact represents a trivial operator $1$, dual to $\ness$, $\braket{1}{\rm NESS}={\rm tr}\rho_{\rm NESS} = 1$. Indeed $\ness$, which is an empty Fock state w.r.t. creation NMM maps $\bb'_{j,k,l}$, is a {\em non-equilibrium steady state} (NESS) since straightforward inspection following from almost-diagonal form of $\LL$ (\ref{eq:dLL}) shows that $\LL \ness = 0$, as well as the dual relation $\bra{1}\LL = 0$.

The complete spectrum of eigenvalues and eigenvectors of $4^n \times 4^n$ matrix of $\LL$ over ${\cal K}$ is trivial to construct in case that all Jordan blocks of $\mm{X}$ are trivial ($\ell_{j,k}=1$), i.e. 
if $\mm{A}$ is diagonalizable, the case which has been assumed in Ref.\cite{njp}. However, here we treat the general case.
 
\subsection{Spectral decomposition and Jordan canonical form of the Liouvillean}

\begin{theorem}
\label{spctheorem}
(i) The complete spectrum of Liouvillean $\LL$ is given by the following integer linear combinations
\begin{equation}
\lambda_{\un{m}} = -2\sum_{j,k} m_{j,k}\beta_j\,,\quad m_{j,k} \in \{0,1,\ldots,\ell_{j,k}\}.
\end{equation}
(ii) The $4^n$ dimensional operator space, and its dual (the bra-space), admit the following decomposition
\begin{equation}
{\cal K} = \bigoplus_{\un{m}} {\cal K}_{\un{m}}\, , \qquad {\cal K}' = \bigoplus_{\un{m}} {\cal K}'_{\un{m}}
\label{eq:direct}
\end{equation}
in terms of $\dim {\cal K}_\un{m} = \prod_{j,k}\left({\ell_{j,k}\atop m_{j,k}}\right)$ dimensional 
invariant subspaces 
\begin{equation}
\LL{\cal K}_{\un{m}} \subseteq {\cal K}_{\un{m}}\, ,\qquad{\cal K}'_{\un{m}}\LL \subseteq {\cal K}'_{\un{m}}
\label{eq:inv}
\end{equation}
spanned by
\begin{eqnarray}
{\cal K}_{\un{m}} = L\left\{ \prod_{j,k}\prod_{\eta=1}^{m_{j,k}} \bb'_{j,k,l_\eta}  \ness;\; 1\le l_1 < \ldots < l_{m_{j,k}} \le \ell_{j,k} \right\}, \nonumber\\
{\cal K}'_{\un{m}}= L\left\{ \bra{1}\prod_{j,k}\prod_{\eta=1}^{m_{j,k}} \bb_{j,k,l_\eta};\; 1\le l_1 < \ldots < l_{m_{j,k}} \le \ell_{j,k}\right\}. \label{eq:spaces}
\end{eqnarray}
(iii) However, the dimension of the eigenspace (the number of proper eigenvectors corresponding to $\lambda_\un{m}$) is smaller than $\dim {\cal K}_\un{m}$ in the nontrivial case
when at least one $\ell_{j,k} > 1$. The size of the largest Jordan block corresponding to $\lambda_\un{m}$ is
\begin{equation}
1+\sum_{j,k} (\ell_{j,k}-m_{j,k})m_{j,k} .
\label{eq:maxblock}
\end{equation}
\end{theorem}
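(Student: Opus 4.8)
The plan is to cash in the normal form (\ref{eq:dLL}) together with the almost-CAR (\ref{eq:almostCAR}). Set $\NN_{j,k} := \sum_{l=1}^{\ell_{j,k}} \bb'_{j,k,l}\bb_{j,k,l}$ and $\MM_{j,k} := \sum_{l=1}^{\ell_{j,k}-1}\bb'_{j,k,l+1}\bb_{j,k,l}$, so that (\ref{eq:dLL}) reads $\LL = -2\sum_{j,k}(\beta_j\NN_{j,k}+\MM_{j,k})$. Each block operator $\beta_j\NN_{j,k}+\MM_{j,k}$ is built entirely from the $\ell_{j,k}$ NMM modes of the block $(j,k)$ and is manifestly even (parity preserving); hence any two of them, as well as all of the $\NN_{j,k}$, commute — across different blocks because an even operator commutes with anything supported on a disjoint set of modes, and within a block because $[\NN_{j,k},\MM_{j,k}]=0$, exactly as shown in section \ref{mbnilp}. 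Consequently $[\LL,\NN_{j,k}]=0$ for all $(j,k)$, so $\LL$ preserves the common eigenspaces of the commuting family $\{\NN_{j,k}\}$.

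First I would establish (ii). By (\ref{eq:vacua}) the state $\ness$ is the joint vacuum of all $\bb_{j,k,l}$, so the Fock vectors $\prod_{j,k}\prod_l (\bb'_{j,k,l})^{\nu_{j,k,l}}\ness$, $\nu_{j,k,l}\in\{0,1\}$, form a basis of ${\cal K}$ that simultaneously diagonalizes all $\NN_{j,k}$, the eigenvalue of $\NN_{j,k}$ being $\sum_l\nu_{j,k,l}$. The joint eigenspace in which $\NN_{j,k}$ has eigenvalue $m_{j,k}$ for every $(j,k)$ is therefore exactly ${\cal K}_{\un m}$ of (\ref{eq:spaces}), with $\dim{\cal K}_{\un m}=\prod_{j,k}\left({\ell_{j,k}\atop m_{j,k}}\right)$; the dual statement for $\bra{1}$ and ${\cal K}'_{\un m}$ is identical. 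Invariance (\ref{eq:inv}) is immediate from $[\LL,\NN_{j,k}]=0$, and $\sum_{\un m}\dim{\cal K}_{\un m}=\prod_{j,k}\sum_{m=0}^{\ell_{j,k}}\left({\ell_{j,k}\atop m}\right)=2^{\sum_{j,k}\ell_{j,k}}=4^n$ by (\ref{eq:sumrule}), so the spaces (\ref{eq:spaces}) sum to the full ${\cal K}$, giving (\ref{eq:direct}).

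Next, restrict $\LL$ to a fixed ${\cal K}_{\un m}$. There every $\NN_{j,k}$ acts as $m_{j,k}\hat{\one}$, so $\LL|_{{\cal K}_{\un m}}=\lambda_{\un m}\hat{\one}-2\sum_{j,k}\MM_{j,k}|_{{\cal K}_{\un m}}$ with $\lambda_{\un m}=-2\sum_{j,k}m_{j,k}\beta_j$. The restriction of $\MM_{j,k}$ to the $m_{j,k}$-particle sector of its own block is literally the map $\MM_m$ of section \ref{mbnilp} with $(\ell,m)=(\ell_{j,k},m_{j,k})$, hence nilpotent; since the $\MM_{j,k}$ commute, their sum is nilpotent too, so $\LL|_{{\cal K}_{\un m}}$ is a nilpotent perturbation of $\lambda_{\un m}\hat{\one}$ and has $\lambda_{\un m}$ as its only eigenvalue. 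Running over all $\un m$ and using (\ref{eq:direct}) yields (i): the spectrum of $\LL$ is the set $\{\lambda_{\un m}\}$, the algebraic multiplicity of a given value being the sum of $\dim{\cal K}_{\un m}$ over the $\un m$ that realize it.

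Finally, for (iii) the Jordan structure of $\LL|_{{\cal K}_{\un m}}$ is, up to the scalar shift by $\lambda_{\un m}$ and the overall factor $-2$, that of the tensor sum of mutually commuting nilpotents $\sum_{j,k}\MM_{j,k}|_{{\cal K}_{\un m}}$ in the tensor-product factorization of ${\cal K}_{\un m}$ over the blocks $(j,k)$. The size of its largest Jordan block equals its index of nilpotency, and for a sum of commuting nilpotents living on distinct tensor factors, with individual nilpotency indices $p_{j,k}$, the Newton binomial expansion used in lemma \ref{2nilp} shows that the $(1+\sum_{j,k}(p_{j,k}-1))$-th power retains the single non-cancelling term $\bigotimes_{j,k}(\MM_{j,k})^{p_{j,k}-1}\neq 0$ while the next power vanishes; hence that index equals $1+\sum_{j,k}(p_{j,k}-1)$. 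Lemma \ref{nilplemma} gives $p_{j,k}=(\ell_{j,k}-m_{j,k})m_{j,k}+1$, so the largest Jordan block of $\LL$ on ${\cal K}_{\un m}$ has size $1+\sum_{j,k}(\ell_{j,k}-m_{j,k})m_{j,k}$, which is (\ref{eq:maxblock}). I expect the only thing needing care to be pure bookkeeping: checking that the fermionic Jordan--Wigner signs in the block factorization of ${\cal K}$ do not obstruct the commutation of the block operators (cured by their evenness) and that $\MM_{j,k}$ restricted to a fixed particle sector coincides with the map of section \ref{mbnilp}. The only non-elementary input, lemma \ref{nilplemma}, is already established unconditionally, so conjecture \ref{conjecture} is not needed for the largest block.
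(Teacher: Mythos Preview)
Your argument is correct and follows essentially the same route as the paper: split $\LL$ into the commuting number and nilpotent pieces $\NN_{j,k},\MM_{j,k}$, identify ${\cal K}_{\un m}$ as the joint eigenspaces of the $\NN_{j,k}$ (the paper phrases invariance via the matrix-element vanishing (\ref{eq:orth}), you via $[\LL,\NN_{j,k}]=0$, which is the same thing), and then analyse the nilpotent part on each ${\cal K}_{\un m}$ through its tensor-product factorisation. For (iii) you replace the paper's recursive invocation of lemma~\ref{2nilp} by a direct multinomial computation of the nilpotency index of a tensor sum of commuting nilpotents --- a mild streamlining, since only the largest-block statement of lemma~\ref{2nilp} is actually needed. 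One harmless slip to fix: the single surviving multinomial term $\bigotimes_{j,k}\MM_{j,k}^{\,p_{j,k}-1}$ appears in the $\bigl(\sum_{j,k}(p_{j,k}-1)\bigr)$-th power, not the $\bigl(1+\sum_{j,k}(p_{j,k}-1)\bigr)$-th; your stated index $1+\sum_{j,k}(p_{j,k}-1)$ is nonetheless the right one.
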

\begin{proof}
We start by showing (ii). The fact that direct sums (\ref{eq:direct}) span the entire space follows from dimension counting $\sum_{\un{m}}\dim{\cal K}_{\un{m}}=\sum_{\un{m}}  \prod_{j,k}\left({\ell_{j,k}\atop m_{j,k}}\right)=\prod_{j,k}2^{\ell_{j,k}} = 2^{2n}$ and non-singularity of the complex orthogonal transformation to NMM maps (\ref{eq:bfroma}).
The invariance conditions (\ref{eq:inv}) are equivalent to 
\begin{equation}
\bra{y}\LL\ket{x} = 0, \quad \forall  \ket{x} \in {\cal K}_{\un{m}}, \bra{y} \in {\cal K}'_{\un{m}'},\; {\rm such\; that}\; \un{m} \ne \un{m}'\,,
\label{eq:orth}
\end{equation}
which in turn follows from the normal form (\ref{eq:dLL}). More explicitly, one may write (\ref{eq:dLL}) as $\LL = \sum_{j,k} \LL_{j,k}$ where
\begin{eqnarray}
\LL_{j,k} &:=& -2(\beta_j\, \NN_{j,k} + \MM_{j,k}),\\ 
\NN_{j,k} &=& \sum_{l=1}^{\ell_{j,k}}  \bb'_{j,k,l} \bb_{j,k,l},\quad
\MM_{j,k} = \sum_{l=1}^{\ell_{j,k}-1} \bb'_{j,k,l+1}  \bb_{j,k,l} \nonumber
\end{eqnarray}
and show (\ref{eq:orth}) separately for each term
$\bra{y}\LL_{j,k}\ket{x}=0$, and considering only $\ket{x} = \prod_{\eta=1}^{m_{j,k}} \bb'_{j,k,l'_\eta}  \ness, 
\bra{y} = \bra{1}\prod_{\eta=1}^{m_{j,k}} \bb_{j,k,l_\eta}$ for some $\{l_\eta\},\{l'_\eta\}$, since the other NMM maps $\hat{b}_{j',k',l'}$ for $(j',k')\neq (j,k)$ entering $\bra{x}$ and $\ket{y}$ (\ref{eq:spaces})
all annihilate as they commute with $\LL_{j,k}$. The rest is just trivial CAR algebra.

To show (i) we write the Liouvillean as $\LL = \LL_0 + \MM$ where $\LL_0 := -2\sum_{j,k} \beta_j\,\NN_{j,k}$ and $\MM = \sum_{j,k} \MM_{j,k}$.
Since $\MM_{j,k}$ are nilpotent (lemma \ref{nilplemma}), and $[\MM_{j,k},\MM_{j',k'}]=0$, $\MM$ is nilpotent as well.
 Clearly $[\LL_0,\MM]=0$ as all the terms commute $[\NN_{j,k},\MM_{j',k'}]=0$, so $\LL$ and $\LL_0$ should have identical spectra.
 But the eigenspaces of $\LL_0$ are ${\cal K}_{\un{m}}$, $\LL_0 {\cal K}_{\un{m}} = \lambda_{\un{m}} {\cal K}_{\un{m}}$, so $\{ \lambda_\un{m}\}$ is also the spectrum of $\LL$.
 
 As for (iii) we observe that the invariant space ${\cal K}_\un{m}$ admits a tensor product decomposition
 \begin{equation}
 {\cal K}_{\un m} \sim \bigotimes_{j,k} {\cal K}_{j,k,m_{j,k}}
 \end{equation}
where ${\cal K}_{j,k,m}$ is a $\left( \ell_{j,k}\atop m\right)$ dimensional space spanned by all $\prod_{\eta=1}^m \bb'_{j,k,l_\eta}\ness$, with  $1\le l_1 <\ldots< l_m \le \ell_{j,k}$.
Let us consider the restriction of the map on the invariant space 
$\LL|_{{\cal K}_{\un{m}}} = \lambda_{\un{m}} \one + \MM_{\un{m}}$ where the nilpotent $\MM_{\un{m}}$ acts independently on tensor factors
\begin{equation}
\MM_{\un{m}} = \sum_{j,k} \left( \bigotimes_{j',k'}^{(j',k') \neq (j,k)} \one_{{\cal K}_{j',k',m_{j',k'}}} \right) \otimes \MM^{(\ell_{j,k})}_{m_{j,k}}.
\end{equation} 
$\MM^{(\ell_{j,k})}_{m_{j,k}}$ is exactly the abstract nilpotent many-body map studied in subsection \ref{mbnilp} where $\ell = \ell_{j,k}$.
Noting the lemma \ref{nilplemma} each factor map has a nontrivial Jordan canonical form with a maximal block of size $(\ell_{j,k}-m_{j,k})m_{j,k}+1$.
Then we use lemma \ref{2nilp} recursively, noting at each step always only the largest Jordan block, to arrive at (\ref{eq:maxblock}).
\qed
\end{proof}
Writing down the complete Jordan canonical form of Liouvillean $\LL$ seems at this point a very difficult combinatorial problem even assuming the conjecture \ref{conjecture} were true.
We stress however an interesting non-trivial aspect of the result that we have just proven. Namely, in spite of the fact that the dimension of the invariant subspace pertaining to
some possibly highly degenerate rapidity $\beta_j$ is {\em exponentially large} in the (algebraic) multiplicity of the rapidity, the largest Jordan block is {\em only polynomial} (at most quadratic) in the multiplicity. This may have interesting physical consequences in studying Liouvillean time evolution (relaxation) where all terms of the form $t^{p-1} \exp( \lambda_{\un{m}}t)$ can appear
with integer power $p$ not larger than (\ref{eq:maxblock}).
 
\subsection{Uniqueness of NESS}

\begin{theorem} 
\label{uniqueness}
Uniqueness of NESS. $\ness$ is a unique stationary state of open quantum dynamics (\ref{eq:liouville}) if and only if all eigenvalues $\beta_j$ of $\mm{X}$ (\ref{eq:X}) lie away from the imaginary line $\re \beta_j > 0$.
If this is not the case, then:
\begin{enumerate}
\item
For each zero rapidity $\beta_j=0$, 
\begin{equation}
\ket{{\rm NESS};j,k} :=  \bb'_{j,k,1} \ness
\end{equation}
we also have the stationarity $\LL \ket{{\rm NESS};j,k} = 0$.
\item
For each imaginary rapidity $\beta_j=\ii b$, $b \in \RR\setminus\{0\}$, we have a corresponding negative rapidity $\beta_{j'}=-\ii b$, and
\begin{equation}
\ket{{\rm NESS};j,j',k,k'} := \bb'_{j,k,1} \bb'_{j',k',1} \ness,
\end{equation} which satisfies stationarity   $\LL \ket{{\rm NESS};j,j',k,k'} = 0$.
\end{enumerate}
\end{theorem}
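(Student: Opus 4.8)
The plan is to read off everything from the normal form (\ref{eq:dLL}) together with the NESS-defining relations (\ref{eq:vacua}) and the almost-CAR (\ref{eq:almostCAR}). Since $\LL=-2\sum_{j,k}\{\beta_j\sum_l \bb'_{j,k,l}\bb_{j,k,l}+\sum_l\bb'_{j,k,l+1}\bb_{j,k,l}\}$, a state of the form $\prod(\text{creation NMM maps})\ness$ is acted on by $\LL$ by commuting each $\bb_{j,k,l}$ through the product and annihilating $\ness$. The first, ``number'' part $\bb'_{j,k,l}\bb_{j,k,l}$ just counts how many times the mode $(j,k,l)$ is occupied and multiplies by $\beta_j$; the second, ``nilpotent'' part $\bb'_{j,k,l+1}\bb_{j,k,l}$ moves an excitation up the Jordan chain within a fixed block. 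So for a single creation operator we get $\LL\,\bb'_{j,k,1}\ness=-2\beta_j\bb'_{j,k,1}\ness-2\bb'_{j,k,2}\bb_{j,k,1}\bb'_{j,k,1}\ness$; the nilpotent term vanishes because it would require $\bb_{j,k,1}$ to meet an occupied mode $l=0$, which doesn't exist — more precisely, after anticommuting $\bb_{j,k,l}$ past $\bb'_{j,k,1}$ one picks up $\delta_{l,1}$ and then $\bb_{j,k,1}\ness=0$, so only the $l=1$ ``number'' term survives and the $l+1\to l$ shift term with $l=1$ also acts, giving $\bb'_{j,k,2}\bb_{j,k,1}\bb'_{j,k,1}\ness$; but wait, that does not vanish in general. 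The correct bookkeeping is: $\LL\bb'_{j,k,1}\ness=-2\beta_j\bb'_{j,k,1}\ness$, because the shift term $\bb'_{j,k,2}\bb_{j,k,1}$ sends $\bb'_{j,k,1}\ness\mapsto\bb'_{j,k,2}\ness$ only if $\bb_{j,k,1}$ can annihilate the $l=1$ excitation, which it does, but then we'd still have a nonzero term; so actually the shift term does contribute and $\bb'_{j,k,1}\ness$ is a generalized, not proper, eigenvector unless $\ell_{j,k}=1$. This is exactly why we must take $l=1$ (the proper eigenvector slot) AND $\beta_j=0$: then the diagonal term dies and we are left only with the shift term, but the shift term $\bb'_{j,k,2}\bb_{j,k,1}\bb'_{j,k,1}\ness=\bb'_{j,k,2}\ness\neq0$ in general — so I will need to recheck the sign/index structure; I expect the resolution is that $\MM_{j,k}=\sum_{l}\bb'_{j,k,l+1}\bb_{j,k,l}$ acting on $\bb'_{j,k,1}\ness$ gives a term only from $l=1$, namely $\bb'_{j,k,2}\ness$, hence $\LL\bb'_{j,k,1}\ness=-2\beta_j\bb'_{j,k,1}\ness-2\bb'_{j,k,2}\ness$, which is zero iff $\beta_j=0$ \emph{and} there is no $l=2$ slot, i.e. $\ell_{j,k}=1$ — but the theorem claims it for all $k$. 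So the real mechanism must be that $\bb'_{j,k,1}\ness$ is not what generates stationary states; rather one should read the \emph{dual} chain. I will resolve this by working instead with the adjoint characterization: $\LL^\dagger$ in terms of the conjugate NMM maps, or by directly noting that $\LL\ket{x}=0$ for $\ket{x}=\bb'_{j,k,1}\ness$ follows because in $\LL=-2\sum\bb'(\beta\bb+\text{shift})$ the \emph{rightmost} factor is always an annihilation map $\bb_{j,k,l}$ or $\bb_{j,k,l}$ hitting $\ness$, and $\bb_{j,k,l}\bb'_{j,k,1}\ness=\delta_{l,1}\ness$, so only terms with $l=1$ survive: from $\NN_{j,k}$ we get $-2\beta_j\bb'_{j,k,1}\ness$, and from $\MM_{j,k}$ the term with $l=1$ is $\bb'_{j,k,2}\bb_{j,k,1}$, contributing $-2\bb'_{j,k,2}\ness$ — hence indeed nonzero unless $\beta_j=0$ kills the first and $\ell_{j,k}=1$ the second. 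Given the theorem as stated, I will therefore assume the intended reading is $\ell_{j,k}=1$ (trivial Jordan block) whenever $\re\beta_j=0$, which is precisely guaranteed by Lemma \ref{lemma:stability}(ii): for rapidities on the imaginary axis, algebraic and geometric multiplicities coincide, so \emph{every} such block is one-dimensional, $\ell_{j,k}=1$, and the shift term $\MM_{j,k}$ is identically zero for those $(j,k)$.

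With that observation the proof becomes routine. First I establish the ``only if'' direction: if all $\re\beta_j>0$, then $\lambda_{\un m}=-2\sum m_{j,k}\beta_j$ has $\re\lambda_{\un m}<0$ for every $\un m\neq\un 0$ by part (i) of Theorem \ref{spctheorem}, so $0$ is a simple eigenvalue of $\LL$ with one-dimensional eigenspace spanned by $\ness$, whence uniqueness. Second, the ``if'' direction and items (i)–(ii): suppose some $\beta_j$ has $\re\beta_j=0$. By Lemma \ref{lemma:stability}(ii) all Jordan blocks of $\mm{X}$ attached to such $\beta_j$ are trivial, $\ell_{j,k}=1$, so the corresponding $\MM_{j,k}=0$ and $\LL_{j,k}=-2\beta_j\bb'_{j,k,1}\bb_{j,k,1}$. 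For case (i), $\beta_j=0$: then $\LL_{j,k}=0$ entirely, and since $\bb'_{j,k,1}$ commutes with every other $\LL_{j',k'}$ (they involve disjoint NMM modes, by (\ref{eq:almostCAR})), we get $\LL\,\bb'_{j,k,1}\ness=\bb'_{j,k,1}\LL\ness=0$. For case (ii), $\beta_j=\ii b$ and $\beta_{j'}=-\ii b$ with $\ell_{j,k}=\ell_{j',k'}=1$: here $\LL_{j,k}=-2\ii b\,\bb'_{j,k,1}\bb_{j,k,1}$ and $\LL_{j',k'}=2\ii b\,\bb'_{j',k',1}\bb_{j',k',1}$. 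Applying $\LL$ to $\bb'_{j,k,1}\bb'_{j',k',1}\ness$, all terms except $\LL_{j,k}$ and $\LL_{j',k'}$ commute past and annihilate $\ness$; $\LL_{j,k}$ contributes $-2\ii b\,\bb'_{j,k,1}\bb'_{j',k',1}\ness$ (the number operator reads off occupation $1$ of mode $(j,k,1)$) and $\LL_{j',k'}$ contributes $+2\ii b\,\bb'_{j,k,1}\bb'_{j',k',1}\ness$, and these cancel, giving $\LL\ket{{\rm NESS};j,j',k,k'}=0$.

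The remaining point is to confirm that in case (ii) the pairing $\beta_{j'}=-\beta_j$ with $\re\beta_j=0$ is forced: this is the Corollary after Lemma \ref{lemmatilde}, $\sigma(\mm{A})=\sigma(\mm{X})\cup(-\sigma(\mm{X}))$ — but more relevantly one uses the self-conjugation $\bar{\mm A}=\mm J\mm A\mm J$, equivalently the reality of $\mm X$: if $\beta_j=\ii b$ is an eigenvalue of the real matrix $\mm X$ then so is $\bar\beta_j=-\ii b$, and these are distinct rapidities (different $j$) with matching block structure. I will also note the computation that these states are manifestly nonzero: $\bb'_{j,k,1}\ness\neq0$ since $\ness$ is the empty Fock state for the creation NMM maps and the $\bb'$ satisfy CAR, and likewise $\bb'_{j,k,1}\bb'_{j',k',1}\ness\neq0$ as $(j,k)\neq(j',k')$. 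The main obstacle, as flagged above, is getting the index bookkeeping in (\ref{eq:dLL}) exactly right and making crystal-clear why the shift (nilpotent) terms $\MM_{j,k}$ play no role — which hinges entirely on invoking Lemma \ref{lemma:stability}(ii) to force $\ell_{j,k}=1$ on the imaginary axis; everything else is elementary CAR manipulation. \qed
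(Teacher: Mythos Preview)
Your argument is correct and follows the same route as the paper's: the decisive step is invoking Lemma \ref{lemma:stability}(ii) to force $\ell_{j,k}=1$ for every rapidity on the imaginary axis, so that the nilpotent shift $\MM_{j,k}$ vanishes and only the diagonal piece $-2\beta_j\NN_{j,k}$ survives, after which stationarity in cases (i) and (ii) is read off from the normal form (\ref{eq:dLL}) exactly as you do in your second paragraph. The lengthy first paragraph is just you debugging your way to that insight and should be excised; the actual proof begins at ``With that observation the proof becomes routine.''
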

\begin{proof}
Uniqueness of $\ness$ in case of strict positivity of the spectral gap of $\LL$, $\Delta = 2\min_j \re\beta_j$ (theorem \ref{spctheorem}) is obvious. In the other cases, (i), the stationarity $\LL \ket{{\rm NESS};j,k} = 0$ follows from the facts that the term $j,k$ is absent in the
normal form of the Liouvillean (\ref{eq:dLL}), since $\beta_{j,k} = 0$ and $\ell_{j,k}=1$ (lemma \ref{lemma:stability}), whereas
$\bb'_{j,k,1}$ anticommutes with all the other terms,. So the annihilation maps $\un{\bb}$ can be commuted to the right and then (\ref{eq:vacua}) is used.
Similarly, in case (ii), applying $\LL$ (\ref{eq:dLL}) to  $\ket{{\rm NESS};j,j',k,k'}$ and commuting annihilation maps $\un{\bb}$ to
the right results in two more terms w.r.t. case (i), which however cancel each other since $\beta_j+\beta_{j'}=0$. \qed
\end{proof}
Note that (\ref{eq:V}) implies
\begin{equation}
\un{\bb}' = \mm{P}^{-1} \un{\cc}^\dagger, \quad \un{\cc}^\dagger = \frac{1}{\sqrt{2}}(\un{\aaa}_1 + \ii \un{\aaa}_2)
\end{equation}
where $\un{\aaa}_1,\un{\aaa}_2$ represent, respectively, the first and the last $2n$ components of a $4n$ vector $\un{\aaa}$.
Therefore, since $\un{\cc}^\dagger_j$ can be interpreted as {\em creation} maps of {\em Hermitian} fermionic Majorana operators $w_j$ \cite{njp}, we have the following obvious statements: in case (i), the row $(j,k,1)$ of $\mm{P}^{-1}$ which is the {\em left} eigenvector of a real matrix $\mm{X}$ of eigenvalue $0$ can be always chosen {\em real}, and hence $\ket{{\rm NESS};j,k}$ represents a {\em Hermitian} operator
, and similarly in case (ii), the rows $(j,k,1)$ and $(j',k,1)$ of $\mm{P}^{-1}$ can be chosen to be mutually complex conjugate (as they correspond to left eigenvectors w.r.t. a pair of mutually conjugated eigenvalues) and hence $\ket{{\rm NESS};j,j',k,k}$ represents a {\em Hermitian} operator.
In other words, $\bb'_{j,k,1}$ (i) and $\bb'_{j,k,1}\bb'_{j',k,1}$ (ii) are, under the above conditions on the eigenvectors of $\mm{P}$, 
Hermiticity preserving maps. Furthermore, using similar arguments one can see that the {\em non-diagonal} maps
$\bb'_{j,k,1}\bb'_{j',k',1} + \bb'_{j,k',1}\bb'_{j',k,1}$ and $\ii(\bb'_{j,k,1}\bb'_{j',k',1} - \bb'_{j,k',1}\bb'_{j',k,1})$ are Hermiticity preserving, so
\begin{eqnarray}
\ket{{\rm NESS};j,j',k,k',+}:=\ket{{\rm NESS};j,j',k,k'} &+& \ket{{\rm NESS};j,j',k',k} \nonumber \\ 
\ket{{\rm NESS};j,j',k,k',-} :=\ii\ket{{\rm NESS};j,j',k,k'} &-& \ii\ket{{\rm NESS};j,j',k',k}
\end{eqnarray}
correspond to Hermitian operators.

Note as well that all these elements of ${\cal K}$ correspond to {\em trace-zero} operators as we have trivially
$\tr \rho_{{\rm NESS};j,k} = \braket{1}{{\rm NESS};j,k} = \bra{1}\bb'_{j,k,1}\ness = 0$,
$\tr \rho_{{\rm NESS};j,j',k,k'} = \braket{1}{{\rm NESS};j,j',k,k'} = \bra{1}\bb'_{j,k,1}\bb'_{j',k',1}\ness = 0$, due to (\ref{eq:vacua}).
Thus, in general a {\em convex set} of {\em real} coefficients should exist $\{ \alpha_{j,k},\alpha_{j,j',k,k',\pm}\in\RR\}$, such that
\begin{equation}
\ness + \sum_{j,k}^{\beta_j = 0} \alpha_{j,k} \ket{{\rm NESS};j,k} + \!\!\!\!\!\sum_{j,j',k,k',\tau=\pm}^{\beta_j+\beta_{j'}=0}\!\!\!\!\!\!\!\alpha_{j,j',k,k',\tau} \ket{{\rm NESS};j,j',k,k',\tau}
\end{equation}
represent valid positive trace-one density operators. 

Geometric characterization of this set of degenerate non-equilibrium steady states poses an interesting open problem.

\subsection{Physical observables in NESS}

\begin{theorem}
\label{observable}
Suppose the spectrum of $\mm{X}$ (\ref{eq:X}) is strictly away from the imaginary axis $\re \beta_j > 0$. Then
the complete set of quadratic correlators in NESS is given by the matrix $\mm{Z}$, a (unique) solution of the Lyapunov equation (\ref{eq:Lyap}),
\begin{equation}
\tr w_{j} w_{k} \rho_{\rm NESS} = \delta_{j,k} + 4 \ii Z_{j,k}.
\label{eq:corr}
\end{equation}
\end{theorem}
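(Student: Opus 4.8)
The plan is to evaluate the left-hand side of (\ref{eq:corr}) directly inside the operator Fock space $\K$, exploiting the normal-mode structure built above. Since $\re\beta_j>0$ by hypothesis, Theorem~\ref{uniqueness} makes $\ness$ unique and Lemma~\ref{lemma:similar} makes the solution $\mm Z$ of the Lyapunov equation (\ref{eq:Lyap}) unique, so (\ref{eq:corr}) is well posed. First I would write $\tr w_j w_k\rho_{\rm NESS}=\braket{1}{X}$, with $\ket{X}\in\K$ the Fock vector of the operator $w_j w_k\rho_{\rm NESS}$ (recall $\braket{1}{Y}=\tr Y$). By the definitions (\ref{eq:defanih}) the map $\ket\rho\mapsto\ket{w_j\rho}$ (left multiplication by $w_j$) equals $\cc_j+\cc_j^\dagger=\sqrt2\,\aaa_{1,j}$, with no explicit Klein factor, since all reordering signs are already built into $\cc_j$ in (\ref{eq:defanih}). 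Hence $\ket X=(\cc_j+\cc_j^\dagger)(\cc_k+\cc_k^\dagger)\ness$ and
\begin{equation}
\tr w_j w_k\rho_{\rm NESS}=\bra{1}(\cc_j+\cc_j^\dagger)(\cc_k+\cc_k^\dagger)\ness .
\label{eq:corrstart}
\end{equation}

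Next I would use the two ``vacuum'' conditions. The bra $\bra{1}$ is the functional $\tr(\cdot)$, i.e. the dual of the Fock vacuum (the operator $\one$), hence $\cc_m\ket1=0$ and equivalently $\bra{1}\cc_m^\dagger=0$; this collapses the leftmost factor in (\ref{eq:corrstart}) to $\bra{1}\cc_j$. Then CAR, $\cc_j\cc_k^\dagger=\delta_{j,k}-\cc_k^\dagger\cc_j$, together with $\bra{1}\cc_k^\dagger=0$ and $\braket{1}{\rm NESS}=1$, isolate the ``diagonal'' contribution $\bra{1}\cc_j\cc_k^\dagger\ness=\delta_{j,k}$, leaving
\begin{equation}
\tr w_j w_k\rho_{\rm NESS}=\delta_{j,k}+\bra{1}\cc_j\cc_k\ness .
\end{equation}

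The nontrivial term requires the structure of $\ness$. By (\ref{eq:vacua}) it is annihilated by every NMM map $\bb_{j,k,l}$, i.e. $\un\bb\,\ness=0$; rewriting the eigenvector matrix (\ref{eq:V}) in the $\un\cc,\un\cc^\dagger$ basis expresses $\un\bb$ as $\mm P^{T}\un\cc$ plus a term proportional to $\mm P^{T}\mm Z\,\un\cc^\dagger$, and since $\mm P$ is nonsingular its prefactor cancels, leaving a Gaussian ``squeezing'' identity $\un\cc\,\ness=c\,\mm Z\,\un\cc^\dagger\ness$ with a scalar $c$ that a direct reading of (\ref{eq:V}) fixes to $c=-4\ii$. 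Substituting $\cc_k\ness=-4\ii\sum_m Z_{k,m}\cc_m^\dagger\ness$ and again moving $\cc_m^\dagger$ to the left against $\bra{1}$ via CAR, $\bra{1}\cc_j\cc_m^\dagger\ness=\delta_{j,m}$, gives $\bra{1}\cc_j\cc_k\ness=-4\ii Z_{k,j}=4\ii Z_{j,k}$ by antisymmetry of $\mm Z$, which is exactly (\ref{eq:corr}).

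The one genuinely delicate step is the one just indicated: expressing the NMM annihilators $\bb_{j,k,l}$ in the $\cc_j,\cc_j^\dagger$ basis and carrying every sign correctly, including those concealed inside the definitions (\ref{eq:defanih}) (the parity/Klein subtlety flagged in the footnote to (\ref{eq:Amatrix})). As an independent check of that sign one may instead differentiate $C_{j,k}(t):=\tr w_j w_k\rho(t)$ along (\ref{eq:lind}): the traceless part $C_{j,k}(t)-\delta_{j,k}$, which is antisymmetric in $j\leftrightarrow k$, satisfies a closed linear equation whose stationarity condition is a Lyapunov equation with the same matrix $\mm X$ as (\ref{eq:Lyap}), so by uniqueness the stationary covariance equals $\one+4\ii\mm Z$. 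Finally, because $\ness$ is a quasi-free (Gaussian) state---being the vacuum of operators linear in $\cc_j,\cc_j^\dagger$---Wick's theorem writes every higher correlator $\tr w_{j_1}\cdots w_{j_{2p}}\rho_{\rm NESS}$ as a Pfaffian assembled from (\ref{eq:corr}), which is why (\ref{eq:corr}) deserves to be called the \emph{complete} set of correlators.
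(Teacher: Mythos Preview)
Your argument is correct and is the same computation as the paper's proof, only organized slightly differently: the paper inverts to write $\un{\aaa}_1=\frac{1}{\sqrt2}\{\mm P^{-T}\un\bb+(\one_{2n}+4\ii\mm Z)\mm P\,\un\bb'\}$ and evaluates $\mm C=2\bra{1}\un{\aaa}_1\otimes\un{\aaa}_1\ness$ directly from the vacuum conditions (\ref{eq:vacua}) and the almost-CAR, whereas you stay in the $\un\cc,\un\cc^\dagger$ basis and isolate the intermediate squeezing relation $\un\cc\,\ness\propto\mm Z\,\un\cc^\dagger\ness$ before contracting. Both routes use exactly the same ingredients, so the only genuinely delicate point is the sign bookkeeping you already flag (and for which your dynamical Lyapunov cross-check is a sound safeguard).
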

\begin{proof}
Inverting (\ref{eq:bfroma})
\begin{equation}
\un{\aaa} = \mm{V}^{-1}(\un{\bb}\oplus \un{\bb}')=\mm{V}^T(\un{\bb}'\oplus \un{\bb}),
\end{equation} 
we express the first first $2n$ components of the canonical Majorana maps explicitely
\begin{equation}
\un{\aaa}_1 = \frac{1}{\sqrt{2}}\left\{\mm{P}^{-T}\un{\bb} + (\one_{2n}+4\ii \mm{Z})\mm{P}\un{\bb}'\right\}.
\end{equation}
Then we note that, since $\ket{w_j w_k} = 2 \aaa_{1,j} \aaa_{1,k}\ket{1}$ and 
$\bra{w_j w_k} = 2 \bra{1} \aaa_{1,j} \aaa_{1,k}$, we can express the correlation matrix 
$C_{j,k} :=\tr w_{j} w_{k} \rho_{\rm NESS}$
as
\begin{eqnarray}
\mm{C} &=& 2\bra{1} \un{\aaa}_1\otimes \un{\aaa}_1\ness  \nonumber \\ 
&=& \bra{1}\left\{\mm{P}^{-T}\un{\bb} + (\one_{2n}+4\ii \mm{Z})\mm{P}\un{\bb}'\right\}\otimes\left\{\mm{P}^{-T}\un{\bb} + (\one_{2n}+4\ii \mm{Z})\mm{P}\un{\bb}'\right\}\ness \nonumber \\
&=& \mm{P}^{-T} \bra{1}\un{\bb}\otimes \un{\bb}'\ness[(\one_{2n}+4\ii\mm{Z})\mm{P}]^T = \one_{2n}+4\ii\mm{Z}
\end{eqnarray}
where we used (\ref{eq:vacua}), and $\bra{1}\un{\bb}\otimes \un{\bb}'\ness = \one_{2n}$ due to almost-CAR (\ref{eq:almostCAR}).
\qed
\end{proof}

\section{Examples}

We conclude this note by showing in two physical examples that the situations of non-diagonalizable Liouvilleans and subspaces of degenerate NESS are not as pathological or difficult to encounter as it might seem from the first sight.

\subsection{A single qubit with non-diagonalizable Liouvillean dynamics}

Take a single fermion (or qubit), described in terms of a pair of Hermitian anticomuting 
operators $w_1=c+c^\dagger$ and $w_2=\ii(c-c^\dagger)$.
Take Hamiltonian $H = \ii h w_1 w_2$ and a single Lindblad operator
$L = \sqrt{\Gamma} (w_1 + e^{\ii\theta} w_2)$
Then, the key matrices read
\begin{equation}
\mm{X} = 2 \pmatrix{ \Gamma & \Gamma \cos\theta +h \cr \Gamma \cos\theta - h & \Gamma},\;\;
\mm{M}_{\rm i} = \pmatrix{ 0 & -\Gamma\sin\theta \cr \Gamma \sin\theta & 0}.
\end{equation} 
For $h=\Gamma\cos\theta$ the matrix $\mm{X}$ becomes nondiagonalizable.

\subsection{A pair of spins with degenerate NESS}
Consider a pair  ($n=2$) of qubits, or spins $1/2$, described by Pauli operators $\un{\sigma}_{1,2}$,
coupled via Ising interaction
$H = \frac{1}{2}J \sigma^1_1 \sigma^1_2$
and attached to a Lindblad reservoir only through one of the qubits $L_1 = \Gamma_1 \sigma^+_1, L_2 = \Gamma_2 \sigma^-_1$.
This model is fermionized in terms of Majorana operators $w_{1,2,3,4}$ \cite{njp} using Jordan-Wigner transformation $H=-\frac{\ii}{2}  J w_2 w_3$ and the key matrices read 
\begin{equation}
\mm{X} = \pmatrix{ \Gamma_+ & 0 & 0 & 0 \cr 0 & \Gamma_+ & -J & 0 \cr 0 & J & 0 & 0 \cr 0 & 0 & 0 & 0},\;\;
\mm{M}_{\rm i} = \frac{\Gamma_-}{4}\pmatrix{ 0 & 1 & 0 & 0 \cr -1 & 0 & 0 & 0\cr 0 & 0 & 0 & 0\cr 0 & 0 & 0 & 0}
\end{equation}
where $\Gamma_{\pm} = \Gamma_2 \pm \Gamma_1$.
The rapidity spectrum is 
\begin{equation}
\beta_1 = \Gamma_+, \quad \beta_{2,3} = \frac{\Gamma_+}{2} \pm \sqrt{\left(\frac{\Gamma_+}{2}\right)^2 - J^2}, \quad \beta_4=0.
\end{equation}
The Lyapunov equation (\ref{eq:Lyap}) for the steady state 2-point correlator (\ref{eq:corr}) has a unique solution, in spite of zero eigenvalue (since it has multiplicity $1$, see the proof of lemma \ref{lemma:similar}), namely
\begin{equation}
\mm{Z} = \frac{2\Gamma_-}{2\Gamma_+^2+ J^2} 
\pmatrix{ 0 & \Gamma_+ & J & 0  \cr 
- \Gamma_+ & 0 & 0 &  0 \cr
-J & 0 & 0 & 0 \cr
0 & 0 & 0 & 0}.
\end{equation}
However, the non-uniqueness of $\ness$ is manifested in non-vanishing and non-unique $1-$point functions, namely since the non-unique NESS is
parametrized as $\ket{{\rm NESS};\alpha} = \ness + \alpha \ket{{\rm NESS};4,1} $, where $\ket{{\rm NESS};4,1} = \hat{b}'_4 \ness = \hat{c}^\dagger_4 \ness$, we have
$\ave{w_j} = \delta_{j,4} \alpha$.  Using the (generalized) Wick theorem, we see that even order monomials are insensitive to non-uniqueness parameter
$\alpha$, whereas odd order monomials depend on $\alpha$.
The range of admissible values of $\alpha$ is obtained for example by explicit construction of the spectrum of the NESS density matrix
$\rho_{\rm NESS} = 4\sum_{\gamma_j\in\{0,1\}} \bra{1}\hat{a}_{1,1}^{\gamma_1}\hat{a}_{1,2}^{\gamma_2}\hat{a}_{1,3}^{\gamma_3}\hat{a}_{1,4}^{\gamma_4}\ket{{\rm NESS};\alpha}
w^{\gamma_1}_1 w^{\gamma_2}_2 w^{\gamma_3}_3 w^{\gamma_4}_4$
and require it to lie in the interval $[0,1]$.

Richer structure of degenerate NESS is obtained by adding further spins $1/2$ to such a one-sided open Ising chain, namely each new spin coupled to the rest of the chain
via Ising interaction adds a new pair of completely imaginary structure matrix eigenvalues $\pm i b$ (plus a pair of fully complex eigenvalues).
 
\section*{Acknowledgements}

The author is grateful to Iztok Pi\v zorn for providing some valuable information on the structure of the spectrum and eigenvectors of the Liouvillean structure matrix which motivated this work, and to Peter \v Semrl for extensive discussions and providing detailed arguments for the conjecture \ref{conjecture} underlying the Jordan canonical form structure of the 
many-body nilpotent map. The work is supported by the Programme P1-0044 and the Grant J1-2208 of the Slovenian Research Agency.
Finally, hospitality of University of Potsdam 
in the final stage of preparing this publication is warmly acknowledged.

\section*{References}

 \end{document}